\newcommand{\abs}[1]{\ensuremath{\lvert #1 \rvert}}
\newcommand{\norm}[1]{\ensuremath{\lVert #1 \rVert}}
\newcommand{\dhausdorff}{\ensuremath{\delta_{\vec{H}}}}
\newcommand{\dhausdorfft}{\ensuremath{\delta_{\vec{H}}^T}}
\newcommand{\hausdorff}{\ensuremath{\delta_H}}
\newcommand{\hausdorfft}{\ensuremath{\delta_H^T}}
\newcommand{\convthreesum}{\textsc{Conv3Sum}\xspace}
\newcommand{\threesum}{\textsc{3Sum}\xspace}
\newcommand{\Oh}{\ensuremath{\mathcal{O}}\xspace}
\newcommand{\Ohtilda}{\ensuremath{\tilde{\mathcal{O}}}\xspace}
\newcommand{\lone}{\ensuremath{L_1}\xspace}
\newcommand{\ltwo}{\ensuremath{L_2}\xspace}
\newcommand{\linf}{\ensuremath{L_\infty}\xspace}
\newcommand{\lp}{\ensuremath{L_p}\xspace}
\newcommand{\RR}{\mathbb{R}\xspace}
\renewcommand{\epsilon}{\varepsilon}
\newcommand{\eps}{\varepsilon}
\title{Translating Hausdorff is Hard: Fine-Grained Lower Bounds for Hausdorff Distance Under Translation}
\titlerunning{Fine-Grained Lower Bounds for Hausdorff Distance Under Translation}
\author{Karl Bringmann}{Saarland University and Max Planck Institute for Informatics,  Saarland Informatics Campus, Saarbrücken, Germany}{bringmann@cs.uni-saarland.de}{}{This work is part of the project TIPEA that has received funding from the European Research Council (ERC) under the European Unions Horizon 2020 research and innovation programme (grant agreement No. 850979).}
\author{André Nusser}{Saarbrücken Graduate School of Computer Science and Max Planck Institute for Informatics,  Saarland Informatics Campus, Saarbrücken, Germany}{anusser@mpi-inf.mpg.de}{}{Part of this work was done at BARC, Copenhagen University, supported by the VILLUM Foundation grant 16582.}
\authorrunning{K. Bringmann, A. Nusser}
\keywords{Hausdorff Distance Under Translation, Fine-Grained Complexity Theory, Lower Bounds}
\begin{document}
\maketitle

\begin{abstract}
Computing the similarity of two point sets is a ubiquitous task in medical imaging, geometric shape comparison, trajectory analysis, and many more settings.
Arguably the most basic distance measure for this task is the Hausdorff distance, which
assigns to each point from one set the closest point in the other set and then evaluates the maximum distance of any assigned pair.
A drawback is that this distance measure is not translational invariant, that is, comparing two objects just according to their shape while disregarding their position in space is impossible. 

Fortunately, there is a canonical translational invariant version, the Hausdorff distance under translation, which minimizes the Hausdorff distance over all translations of one of the point sets.
For point sets of size $n$ and $m$, the Hausdorff distance under translation can be computed in time $\Ohtilda(nm)$ for the $\lone$ and $\linf$ norm [Chew, Kedem SWAT'92] and $\Ohtilda(nm (n+m))$ for the $\ltwo$ norm [Huttenlocher, Kedem, Sharir DCG'93].

As these bounds have not been improved for over 25 years, in this paper we approach the Hausdorff distance under translation from the perspective of fine-grained complexity theory.
We show (i) a matching lower bound of $(nm)^{1-o(1)}$ for \lone and \linf (and all other \lp norms) assuming the Orthogonal Vectors Hypothesis and (ii) a matching lower bound of $n^{2-o(1)}$ for \ltwo in the imbalanced case of $m = \Oh(1)$ assuming the 3SUM Hypothesis.  \end{abstract}

\section{Introduction} \label{sec:introduction}
As data sets become larger and larger, the requirement for faster algorithms to handle such amounts of data becomes increasingly necessary. One very common type of data that is created during measurements is point sets in the plane, for example when recording GPS trajectories or describing shapes of objects, in medical image analysis, and in various data science applications.

A fundamental algorithmic tool for analyzing point sets is to compute the similarity of two given sets of points. There are several different measures of similarity in this setting, for example Hausdorff distance \cite{hausdorff1914grundzuge}, geometric bottleneck matching \cite{efrat_geometry_2001}, Fréchet distance~\cite{alt_computing_1995}, and Dynamic Time Warping \cite{DBLP:books/daglib/0019158}. 
Among these measures, the Hausdorff distance is arguably the most basic and intuitive: It assigns to each point from one set the closest point in the other set and then evaluates the maximum distance of all assigned pairs of points.\footnote{There is a directed and an undirected variant of the Hausdorff distance, see Section~\ref{sec:preliminaries}. In this introduction, we do not differentiate between these two, since all our statements hold for both variants.} 
For a discussion of the other previously mentioned distance measures, see Section~\ref{sec:related_work}.

While these similarity measures are of great practical relevance, for some applications it is a drawback that they are not 
translational invariant, i.e., when translating one of the point sets, the distance can -- and in most cases will -- change. 
This is unfavorable in applications that ask for comparing the shape of two objects, meaning that the absolute position of an object is irrelevant. Examples of this task arise for example in 2D object shape similarity, medical image analysis \cite{fedorov2008evaluation}, classification of handwritten characters~\cite{DBLP:conf/esa/BringmannKN20}, movement patterns of animals \cite{DBLP:conf/gis/BuchinDLN19}, and sports analysis \cite{DEBERG2013747}.

Fortunately, any point set similarity measure has a canonical translational invariant version, by minimizing the similarity measure over all translations of the two given point sets.
For the Hausdorff distance this variant is known as the \emph{Hausdorff distance under translation}, see Section~\ref{sec:preliminaries} for a formal definition. Given two point sets in the plane of size $n$ and $m$, the Hausdorff distance under translation can be computed in time $\Oh(nm \log^2 nm)$ for the $\lone$ and $\linf$ norm~\cite{n2alg}, and in time $\Oh(nm (n+m) \log nm)$ for the $\ltwo$ norm~\cite{n3alg}. We are not aware of any lower bounds for this problem, not even conditional on a plausible hypothesis.
The only results in this direction are $\Omega(n^3)$ lower bounds on the arrangement size~\cite{n2alg} and on the number of connected components of the feasible translations~\cite{components_lb} (for the decision problem on points in the plane with $n=m$). However, these bounds also hold for \lone and \linf, where they are ``broken'' by the $\Oh(nm \log^2 nm)$-time algorithm~\cite{n2alg}, so apparently these bounds are irrelevant for the running time complexity.

\medskip
In this paper, we approach the Hausdorff distance under translation from the viewpoint of fine-grained complexity theory~\cite{williams2018some}.
For two problem settings, we show that the known algorithms are optimal up to lower order factors assuming standard hypotheses:

\begin{enumerate}
	\item We show an $(nm)^{1-o(1)}$ lower bound for all \lp norms --- and in particular \lone and \linf, matching the $\Oh(nm \log^2 nm)$-time algorithm from~\cite{n2alg} up to lower order factors, see Section~\ref{sec:ov_lb}.
	
	This result holds conditional on the Orthogonal Vectors Hypothesis, which states that finding two orthogonal vectors among two given sets of $n$ binary vectors in $d$ dimensions cannot be done in time $\Oh(n^{2-\varepsilon} \textrm{poly}(d))$ for any $\varepsilon > 0$. It is well-known that the Orthogonal Vectors Hypothesis is implied by the Strong Exponential Time Hypothesis~\cite{DBLP:journals/tcs/Williams05}, and thus our lower bound also holds assuming the latter~\cite{DBLP:journals/jcss/ImpagliazzoPZ01}.
	These two hypotheses are the most standard assumptions used in fine-grained complexity theory in the last decade~\cite{williams2018some}.
	\item We show an $n^{2-o(1)}$ lower bound for $\ltwo$ in the imbalanced case $m = \Oh(1)$, matching the $\Oh(nm (n+m) \log nm)$-time algorithm from~\cite{n2alg} up to lower order factors, see Section~\ref{sec:3sum_lb}.
	Previously, an $n^{2-o(1)}$ lower bound was only known for the more general problem of computing the Hausdorff distance under translation of sets of \emph{segments} in the case that both sets have size $n$ (a problem for which the best known algorithm runs in time\footnote{By $\Ohtilda$-notation we ignore logarithmic factors in $n$ and $m$.} $\Ohtilda(n^4)$)~\cite{3sum}.
	
	Our result holds conditional on the 3SUM Hypothesis, which states that deciding whether, among $n$ given integers, there are three that sum up to 0 requires time $n^{2-o(1)}$. This hypothesis was introduced by Gajentaan and Overmars~\cite{DBLP:journals/comgeo/GajentaanO95}, is a standard assumption in computational geometry~\cite{king2004survey}, and has also found a wealth of applications beyond geometry (see, e.g.,~\cite{DBLP:conf/focs/AbboudBBK17,DBLP:conf/icalp/AbboudWW14,DBLP:conf/icalp/AmirCLL14,3sum_to_conv3sum}).
\end{enumerate}

Our lower bounds close gaps that have not seen any progress over 25 years. Furthermore, note that our second lower bound shows a separation between the $\ltwo$ norm and the $\lone$ and $\linf$ norms, as in the imbalanced case $m = \Oh(1)$ the latter admits a $\Ohtilda(n)$-time algorithm~\cite{n2alg} while the former requires time $n^{2-o(1)}$ assuming the 3SUM Hypothesis.
We leave it as an open problem whether for \ltwo the balanced case $n = m$ requires time $n^{3-o(1)}$.

\subsection{Related work} \label{sec:related_work}

Our work continues a line of research on fine-grained lower bounds in computational geometry, which had early success with the 3SUM Hypothesis~\cite{DBLP:journals/comgeo/GajentaanO95} and recently got a new impulse with the Orthogonal Vectors Hypothesis (or Strong Exponential Time Hypothesis) and resulting lower bounds for the Fr\'echet distance~\cite{DBLP:conf/focs/Bringmann14}, see also~\cite{DBLP:conf/soda/BuchinOS19,DBLP:journals/jocg/BringmannM16}. Continuing this line of research is getting increasingly difficult, although there are still many classical problems from computational geometry without matching lower bounds. In this paper we obtain such bounds for two settings of the classical Hausdorff distance under translation.

\medskip
Besides Hausdorff distance, there are several other distance measures on point sets, including geometric bottleneck matching \cite{efrat_geometry_2001}, Fr\'echet distance~\cite{alt_computing_1995}, and Dynamic Time Warping \cite{DBLP:books/daglib/0019158}.
The geometric bottleneck matching minimizes the maximal distance in a perfect matching between the two given point sets.
Fr\'echet distance and Dynamic Time Warping additionally take the order of the input points into account. 
They both consider the same class of \emph{traversals} of the input points, and the Fréchet distance minimizes the \emph{maximal} distance that occurs during the traversal, while Dynamic Time Warping minimizes the \emph{sum} of distances.

Let us discuss the canonical translational invariant versions of these distance measures. 
For geometric bottleneck matching under translation, Efrat et al.\ designed an $\Ohtilda(n^5)$ algorithm~\cite{efrat_geometry_2001}.
The discrete Fréchet distance under translation has an $\Ohtilda(n^{4.66\dots})$-time algorithm and a conditional lower bound of $n^{4-o(1)}$~\cite{DBLP:conf/soda/BringmannKN19}, see also~\cite{DBLP:conf/esa/BringmannKN20} for algorithm engineering work on this topic. While Dynamic Time Warping is a very popular measure (in particular for video and speech processing), no exact algorithm for its canonical translational invariant version is known in $\ltwo$ since it contains the geometric median problem as a special case \cite{bajaj1988algebraic}.

\medskip
Further work on the Hausdorff distance under translation includes an $\Oh((n+m) \log nm)$-time algorithm for point sets in one dimension~\cite{1dalg}. For generalizations to dimensions $d > 2$ see~\cite{n2alg, chew_geometric_1999}.

\section{Preliminaries} \label{sec:preliminaries}
In this paper we consider finite point sets which lie in $\RR^2$.
For any $p \in \RR^2$, we use $p_x$ and $p_y$ to refer to its first and second component, respectively.
For a point set $A \subset \RR^2$ and a translation $\tau \in \RR^2$, we define $A + \tau \coloneqq \{a + \tau \mid a \in A\}$.
To denote index sets, we often use $[n] \coloneqq \{1, \dots, n\}$.
Given a point $q \in \RR^2$, its $p$-norm is defined as
\[
\norm{q}_p \coloneqq \left(\abs{q_x}^p + \abs{q_y}^p \right)^{\frac{1}{p}}.
\]

We now introduce several distance measures, which are all versions of the famous Hausdorff distance. First, let us define the most basic version.
Let $A, B \subset \mathbb{R}^2$ be two point sets. The \emph{directed Hausdorff distance} is defined as
\[
\dhausdorff(A, B) \coloneqq \max_{a \in A} \min_{b \in B} \norm{a - b}_p.
\]
Note that, intuitively, the directed Hausdorff distance measures the distance from $A$ to $B$ but not from $B$ to $A$, and it is not symmetric. A symmetric variant of the Hausdorff distance, the \emph{undirected Hausdorff distance}, is defined as
\[
	\hausdorff(A,B) \coloneqq \max\{\dhausdorff(A,B), \dhausdorff(B,A)\}.
\]
Note that, by definition, $\dhausdorff(A, B) \leq \hausdorff(A,B)$.
Both of the above distance measures can be modified to a version which is invariant under translation.
The \emph{directed Hausdorff distance under translation} is defined as
\[
	\dhausdorfft(A, B) \coloneqq \min_{\tau \in \RR^2} \dhausdorff(A, B+\tau),
\]
and the \emph{undirected Hausdorff distance under translation} is defined as
\[
	\hausdorfft(A,B) \coloneqq \min_{\tau \in \RR^2} \hausdorff(A,B+\tau).
\]
Again, it holds that $\dhausdorfft(A, B) \leq \hausdorfft(A,B)$.
Naturally, for all of the above distance measures, the decision problem is defined such that we are given two point sets $A, B$ and a threshold distance $\delta$, and ask if the distance of $A, B$ is at most $\delta$.

For the Hausdorff distance on point sets (without translation) the undirected distance is at most as hard as the directed distance, because the undirected distance can be calculated using two calls to an algorithm computing the directed distance.\footnote{Actually, the directed Hausdorff distance is also at most as hard as the undirected Hausdorff distance (thus, they are equally hard), as $\dhausdorff(A,B) = \hausdorff(A \cup B, B)$.} However, note that for the Hausdorff distance under translation, we cannot just compute the directed distance twice and then obtain the undirected distance as we have to take the maximum for the same translation.

 \newcommand{\xtrans}{\ensuremath{[-(n+\frac{1}{2})\epsilon-\epsilon^2, -\frac{3}{2}\epsilon]}}
\newcommand{\ytrans}{\ensuremath{[-\frac{1}{8}, \frac{1}{8}]}}
\newcommand{\trans}{\ensuremath{\xtrans \times \ytrans}}
\newcommand{\Vbar}{\ensuremath{\overline{V}}}
\newcommand{\Vbarr}{\ensuremath{\overline{V_r}}}

\section{\boldmath OV based $(mn)^{1-o(1)}$ lower bound for \lp}\label{sec:ov_lb}

We now present a conditional lower bound of $(mn)^{1-o(1)}$ for the Hausdorff distance under translation --- first for \lone and \linf, and then we discuss how to generalize this bound to \lp. We present the first lower bound only for the \lone case, as the same construction carries over to the \linf case via a rotation of the input sets by $\tfrac{\pi}{4}$.
Our lower bound is based on the hypothesized hardness of the Orthogonal Vectors problem.
\begin{definition}[Orthogonal Vectors Problem (OV)]
	Given two sets $X, Y \subset \{0,1\}^d$ with $|X| = m, |Y| = n$, decide whether there exist $x \in X$ and $y \in Y$ with $x \cdot y = 0$.
\end{definition}
A popular hypothesis from fine-grained complexity theory is as follows.
\begin{definition}[Orthogonal Vectors Hypothesis (OVH)]
	The Orthogonal Vectors problem cannot be solved in time $\Oh((nm)^{1-\epsilon} \text{poly}(d))$ for any $\epsilon > 0$.
\end{definition}
This hypothesis is typically stated and used for the balanced case $n=m$. However, it is known that the hypothesis for the balanced case is equivalent to the hypothesis for any unbalanced case $n = m^\alpha$ for any fixed constant $\alpha > 0$, see, e.g,~\cite[Lemma 2.1]{DBLP:conf/soda/BringmannK18}.

We now describe a reduction from Orthogonal Vectors to Hausdorff distance under translation.
To this end, we are given two sets of $d$-dimensional binary vectors $X = \{x_1, \dots, x_m\}$ and $Y = \{y_1, \dots, y_n\}$ with $\abs{X} = m$ and $\abs{Y} = n$, and we construct an instance of the undirected Hausdorff distance under translation defined by point sets $A$ and $B$ and a decision distance $\delta = 1$. First, we describe the high-level structure of our reduction. The point set $A$ consists only of Vector Gadgets, which encode the vectors of $X$ using $2md$ points. The point set $B$ consists of three types of gadgets:
\begin{itemize}
	\item \emph{Vector Gadgets:} They encode the vectors from $Y$, very similarly to the Vector Gadgets of $A$.
	\item \emph{Translation Gadget:} It restricts the possible translations of the point set $B$.
	\item \emph{Undirected Gadget:} It makes our reduction work for the undirected Hausdorff distance under translation by ensuring that the maximum over the directed Hausdorff distances is always attained by $\dhausdorff(B+\tau, A)$.
\end{itemize}

See Figure \ref{fig:lp_lb_overview} for an overview of the reduction.
Intuitively, the first dimension of the translation chooses the vector $y \in Y$ while the second dimension of the translation chooses the vector $x \in X$. An alignment of the Vector Gadgets within distance 1 is then possible if and only if $x$ and $y$ are orthogonal. Alignments that can circumvent this orthogonality check are not possible as we restrict the translations to a small set of candidates by placing dummy Vector Gadgets on the right side and by including a Translation Gadget.

\subsection{Gadgets}

\begin{figure}
	\centering
	\includegraphics{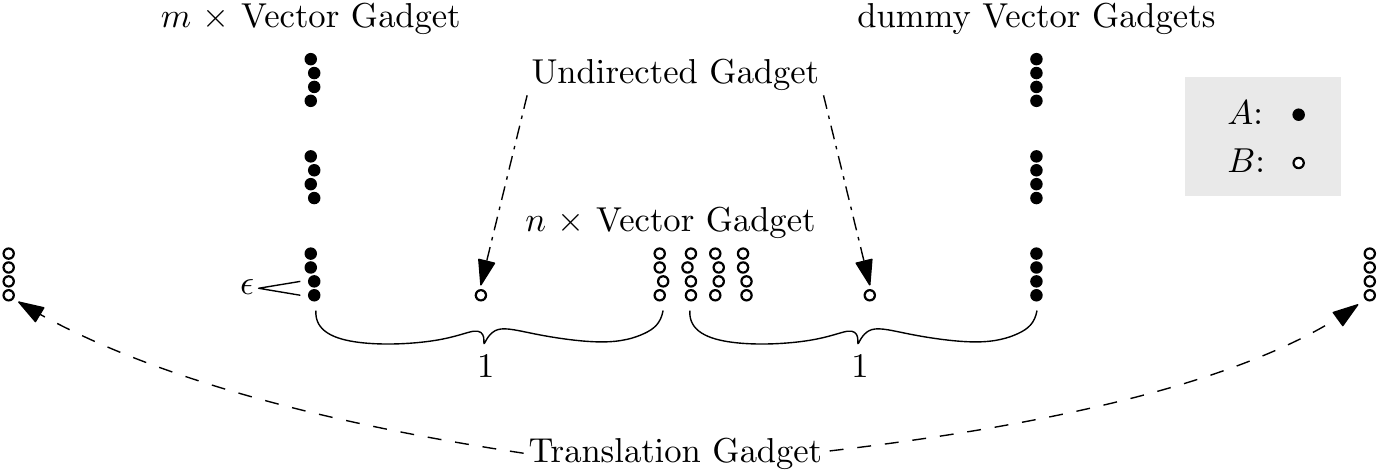}
	\caption{Sketch of the reduction from OV to the undirected Hausdorff distance under translation.}
	\label{fig:lp_lb_overview}
\end{figure}

We now describe the gadgets in detail. Let $\epsilon > 0$ be a sufficiently small constant, e.g., $\epsilon = \frac{1}{20mnd}$. Recall that the distance for which we want to solve the decision problem is $\delta = 1$. Furthermore, we denote the $i$th component of a vector $v$ by $v[i]$ and we use $0^d$ and $1^d$ to denote the $d$-dimensional all-zeros and all-ones vector, respectively.

\paragraph*{Vector Gadget} We define a general Vector Gadget, which we then use at several places by translating it. Given a vector $v \in \{0,1\}^d$, the Vector Gadget consists of the points $p_1, \dots, p_{d} \in \RR^2$:
\[
p_i =
\begin{cases}
	(\epsilon^2, i\epsilon), & \text{if } v[i] = 0 \\
	(0, i\epsilon), & \text{if } v[i] = 1 \\
\end{cases}
\]
We denote the Vector Gadget created from vector $v$ by $V(v)$. Additionally, we define a mirrored version of the gadget $V$ as
\[
	\Vbar(v) \coloneqq V(\bar{v}),
\]
where $\bar{v}$ is the inversion of $v$, i.e., each bit is flipped.

\begin{lemma}\label{lem:vector_gadget}
	Given two vectors $v_1, v_2 \in \{0,1\}^d$ and corresponding Vector Gadgets $V_1 = V(v_1)$ and $V_2 = \Vbar(v_2) + (1, 0)$, we have $\hausdorff(V_1, V_2) \leq 1$ if and only if $v_1 \cdot v_2 = 0$.
\end{lemma}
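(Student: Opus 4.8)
The plan is to use the rigid geometry of the two gadgets (working, as throughout this section, with the \lone norm). Both $V_1 = V(v_1)$ and $V_2 = \Vbar(v_2) + (1,0)$ contain exactly one point at each height $i\epsilon$, $i \in [d]$; the points of $V_1$ lie in the vertical strip $\{0,\epsilon^2\} \times \RR$ and those of $V_2$ in the strip $\{1, 1+\epsilon^2\} \times \RR$. Thus the two gadgets are horizontally separated by about $1$, the vertical spacing within a gadget is $\epsilon$, and the horizontal wiggle is only $\epsilon^2$; since $\epsilon^2 \ll \epsilon \ll 1$ (e.g.\ $\epsilon = \tfrac{1}{20mnd} \le \tfrac12$ suffices), I would first argue that in both directed Hausdorff distances every point is matched to the point of the other gadget at the same height. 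Concretely, writing $p_i, q_i$ for the height-$i\epsilon$ points of $V_1, V_2$, we have $\norm{p_i - q_i}_1 = \abs{(p_i)_x - (q_i)_x} \le 1 + \epsilon^2$, while for $j \ne i$ the horizontal gap is at least $1 - \epsilon^2$ and the vertical gap at least $\epsilon$, so $\norm{p_i - q_j}_1 \ge (1-\epsilon^2) + \epsilon \ge 1 + \epsilon^2$. Hence $\min_j \norm{p_i - q_j}_1 = \abs{(p_i)_x - (q_i)_x}$ and, symmetrically, $\dhausdorff(V_1, V_2) = \dhausdorff(V_2, V_1) = \max_{i \in [d]} \abs{(p_i)_x - (q_i)_x}$, so $\hausdorff(V_1, V_2) = \max_{i\in[d]} \abs{(p_i)_x - (q_i)_x}$.

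It then remains to evaluate these per-height horizontal distances by a short case analysis. From the definition, $(p_i)_x = \epsilon^2$ if $v_1[i]=0$ and $(p_i)_x = 0$ if $v_1[i] = 1$, i.e.\ $(p_i)_x = (1 - v_1[i])\,\epsilon^2$; and since $\Vbar(v_2) = V(\bar v_2)$, the point $q_i$ has $(q_i)_x = 1 + \epsilon^2$ if $v_2[i] = 1$ and $(q_i)_x = 1$ if $v_2[i] = 0$, i.e.\ $(q_i)_x = 1 + v_2[i]\,\epsilon^2$. Therefore $\abs{(p_i)_x - (q_i)_x} = 1 + (v_1[i] + v_2[i] - 1)\epsilon^2$, which equals $1 - \epsilon^2$, $1$, or $1 + \epsilon^2$ according to whether $v_1[i] + v_2[i]$ is $0$, $1$, or $2$. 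In particular $\abs{(p_i)_x - (q_i)_x} \le 1$ iff $v_1[i]\cdot v_2[i] = 0$, and taking the maximum over $i$ shows $\hausdorff(V_1, V_2) \le 1$ iff $v_1[i] \cdot v_2[i] = 0$ for all $i$, which, as all terms are nonnegative, is equivalent to $v_1 \cdot v_2 = 0$.

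I expect the only delicate point to be the first step --- ruling out that a point is matched to a point of the other gadget at a different height --- which is exactly where the scale separation $\epsilon^2 < \epsilon$ and $\epsilon \le \tfrac12$ is used, and the reason the gadget uses the tiny offset $\epsilon^2$ rather than $\epsilon$. One should also be slightly careful with strictness in the ``only if'' direction: when $v_1[i] = v_2[i] = 1$ we need $\hausdorff(V_1,V_2) > 1$, and indeed $p_i = (0, i\epsilon)$ while every point of $V_2$ has $x$-coordinate $\ge 1$, and the unique point that would be at \lone-distance exactly $1$ from $p_i$, namely $(1, i\epsilon)$, is absent since $v_2[i] = 1$ places $q_i$ at $(1+\epsilon^2, i\epsilon)$. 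Everything else is a direct computation.
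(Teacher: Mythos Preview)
Your proof is correct and follows essentially the same approach as the paper: show that cross-height matches are too far, so the Hausdorff distance reduces to $\max_i \abs{(p_i)_x-(q_i)_x} = \max_i\bigl(1+(v_1[i]+v_2[i]-1)\epsilon^2\bigr)$. The paper compresses the first step into the single exact identity $\norm{p_i-q_j}_1 = 1 + \abs{i-j}\epsilon + (v_1[i]+v_2[j]-1)\epsilon^2 > 1$ for $i\ne j$ (needing only $\epsilon<1$), whereas you bound the horizontal part by $1-\epsilon^2$ and then use $\epsilon \le \tfrac12$; both are fine, and your extra discussion of strictness in the ``only if'' direction is redundant but harmless.
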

\begin{proof}
	Let the points of $V_1$ (resp. $V_2$) be denoted as $p_1, \dots, p_d$ (resp. $q_1, \dots, q_d$). First, note that $\norm{p_i - q_j}_1 = 1 + \abs{i-j}\epsilon + (v_1[i]+v_2[j]-1)\epsilon^2 > 1$ for $i \neq j$. Thus, for the Hausdorff distance to be at most $1$, we have to match $p_i$ to $q_i$ for all $i \in [d]$. This is possible if and only if $v_1[i] = 0$ or $v_2[i] = 0$, as $p_i$ and $q_i$ are only at distance larger than 1 for $v_1[i] = 1$ and $v_2[i] = 1$.
\end{proof}
See Figure \ref{fig:vector_gadget} for an example. Note that if we swap both gadgets and invert both vectors (i.e., flip all their bits), the Hausdorff distance does not change and thus an analogous version of Lemma \ref{lem:vector_gadget} holds in this case, as we are just performing a double inversion.
\begin{lemma}\label{lem:vector_gadget_swapped}
Given two vectors $v_1, v_2 \in \{0,1\}^d$ and corresponding Vector Gadgets $V_1 = \Vbar(v_1)$ and $V_2 = V(v_2) + (1, 0)$, we have $\hausdorff(V_1, V_2) \leq 1$ if and only if $\bar{v}_1 \cdot \bar{v}_2 = 0$, where $\bar{v}_1, \bar{v}_2$ are the inversions of $v_1, v_2$.
\end{lemma}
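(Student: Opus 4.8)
The plan is to derive Lemma~\ref{lem:vector_gadget_swapped} directly from Lemma~\ref{lem:vector_gadget} by the double-inversion observation sketched in the paragraph preceding the statement, rather than repeating the coordinate computation. Recall that $\Vbar(v) = V(\bar v)$ by definition. So the configuration in Lemma~\ref{lem:vector_gadget_swapped}, namely $V_1 = \Vbar(v_1)$ and $V_2 = V(v_2) + (1,0)$, can be rewritten as $V_1 = V(\bar v_1)$ and $V_2 = \Vbar(\bar v_2) + (1,0)$, since $V(v_2) = \Vbar(\bar v_2)$ (again by definition, as $\bar{\bar{v_2}} = v_2$). This is now exactly the situation of Lemma~\ref{lem:vector_gadget} with the roles played by the vectors $\bar v_1$ and $\bar v_2$ in place of $v_1$ and $v_2$.

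Applying Lemma~\ref{lem:vector_gadget} to the pair $\bar v_1, \bar v_2$ gives $\hausdorff(V_1, V_2) \le 1$ if and only if $\bar v_1 \cdot \bar v_2 = 0$, which is precisely the claimed equivalence. I would write this as a two- or three-line proof: first expand the definitions of $\Vbar$ to match the template of Lemma~\ref{lem:vector_gadget}, then invoke that lemma with $\bar v_1, \bar v_2$.

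I do not expect any real obstacle here; the only thing to be careful about is the bookkeeping of which vector is being inverted and making sure the ``$+(1,0)$'' translation sits on the correct gadget so that the hypotheses of Lemma~\ref{lem:vector_gadget} are met verbatim. One could alternatively give the self-contained distance computation: for points $p_i \in V_1 = \Vbar(v_1)$ and $q_j \in V_2 = V(v_2)+(1,0)$ one checks $\norm{p_i - q_j}_1 = 1 + \abs{i-j}\epsilon + (\bar v_1[i] + \bar v_2[j] - 1)\epsilon^2$, forcing the matching of $p_i$ to $q_i$, which then succeeds iff $\bar v_1[i] = 0$ or $\bar v_2[i] = 0$ for every $i$, i.e.\ iff $\bar v_1 \cdot \bar v_2 = 0$. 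But the reduction-to-Lemma~\ref{lem:vector_gadget} route is cleaner and is what the surrounding text already signals, so that is the one I would present.
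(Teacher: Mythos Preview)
Your proposal is correct and matches the paper's own justification: the paper does not give a separate proof of this lemma but states in the preceding sentence that it follows from Lemma~\ref{lem:vector_gadget} by the double-inversion observation, which is exactly the argument you spell out.
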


\begin{figure}
	\centering
	\includegraphics{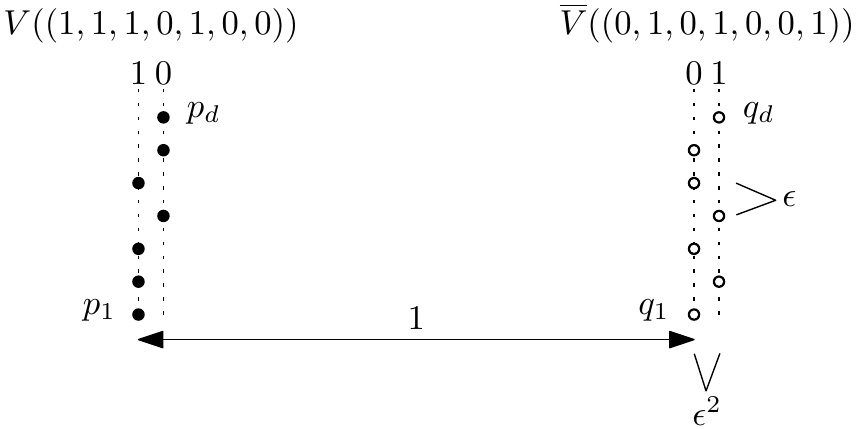}
	\caption{A depiction of the two types of Vector Gadgets and how they are placed to check for orthogonality.}
	\label{fig:vector_gadget}
\end{figure}

For any $x,y,D \in \RR$, we call Vector Gadgets $V_1 = V(v_1) + (x,y)$ and $V_2 = \Vbar(v_2) + (x+D,y)$ \emph{vertically aligned}, or more precisely, \emph{vertically aligned at distance $D$}.

\paragraph*{Translation Gadget} To ensure that $B$ cannot be translated arbitrarily, we introduce a gadget to restrict the translations to a restricted set of candidates.
The Translation Gadget $T$ consists of two translated Vector Gadgets of the zero vector:
\[
	T \coloneqq (\Vbar(1^d) - (2-n\epsilon, 0)) \cup (\Vbar(0^d) + (2+2\epsilon, 0)).
\]

We show that restricting the coordinates of the points of the other set involved in the Hausdorff distance under translation instance, already restricts the feasible translations significantly.
\begin{lemma}\label{lem:translation_gadget}
	Let $P \subset [-1-\frac{1}{2}\epsilon, 1+\frac{1}{2}\epsilon] \times \RR$ be a point set and $T$ the Translation Gadget.
	If $\dhausdorfft(T,P) \leq 1$, then $\tau_x^* \in \xtrans$, where $\tau^*$ is
	any translation satisfying $\dhausdorff(T,P+\tau^*) \leq 1$.
\end{lemma}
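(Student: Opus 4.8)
The plan is to reduce the statement to a one-dimensional argument by projecting everything onto the $x$-axis, where $T$ is just two ``columns'' of points placed far apart. Spelling out the construction: $\Vbar(1^d) = V(0^d)$ consists of $d$ points all lying on the vertical line $x = \epsilon^2$, and $\Vbar(0^d) = V(1^d)$ consists of $d$ points all lying on the line $x = 0$. Hence every point of the left Vector Gadget of $T$ has $x$-coordinate $x_L \coloneqq \epsilon^2 - 2 + n\epsilon$, every point of the right Vector Gadget has $x$-coordinate $x_R \coloneqq 2 + 2\epsilon$, and $x_R - x_L = 4 + (2-n)\epsilon - \epsilon^2$, which is \emph{just below} $4$ since $n\epsilon \le \tfrac1{20}$ by the choice $\epsilon = \tfrac1{20mnd}$.

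First I would fix an arbitrary translation $\tau^*$ with $\dhausdorff(T, P + \tau^*) \le 1$ and extract the constraint each column puts on $\tau_x^*$. For any point $t$ of the left Vector Gadget there is, by assumption, a point $p \in P$ with $\norm{t - p - \tau^*}_1 \le 1$, so in particular $\abs{x_L - p_x - \tau_x^*} \le 1$; since every $p \in P$ has $p_x \in [-1-\tfrac12\epsilon, 1+\tfrac12\epsilon]$, rearranging confines $\tau_x^*$ to an interval of length $4 + \epsilon$ whose location is determined by $x_L$. The identical argument for the right Vector Gadget confines $\tau_x^*$ to a second interval of length $4 + \epsilon$, whose location is determined by $x_R$. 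Because these two intervals are shifted relative to one another by $x_R - x_L \approx 4$ --- just short of their common length --- their intersection is a short interval of length $(n-1)\epsilon + \epsilon^2$, and carrying out the arithmetic identifies it with $\xtrans$. Since $\tau^*$ was arbitrary, the lemma follows. Note that this argument never constrains $\tau_y^*$: we only used $\norm{t-p-\tau^*}_1 \ge \abs{x_L - p_x - \tau_x^*}$, which ignores the $y$-coordinates entirely --- consistent with the lemma asserting a bound only on $\tau_x^*$. Likewise $P$ need not fill the slab; the only property used is $p_x \in [-1-\tfrac12\epsilon, 1+\tfrac12\epsilon]$ for every $p \in P$.

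I do not expect a genuine obstacle here: once projected to the $x$-axis the statement is elementary interval arithmetic. The parts that need care are the bookkeeping of the lower-order $\tfrac12\epsilon$ and $\epsilon^2$ terms, and checking that when intersecting the two length-$(4+\epsilon)$ intervals the $\max$ of their left endpoints and the $\min$ of their right endpoints are exactly the two endpoints of $\xtrans$ --- which is precisely the place where one uses that $\epsilon$ (more precisely $n\epsilon$) is sufficiently small.
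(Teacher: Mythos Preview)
Your approach is exactly the paper's: project to the $x$-axis, observe that each of the two columns of $T$ forces $\tau_x^*$ into a length-$(4+\epsilon)$ interval, and intersect; the paper merely phrases the same computation as a contrapositive (for $\tau_x^*$ outside the target range, one column has no point of the translated set within $x$-distance $1$). One small caveat when you actually carry out the arithmetic: following the lemma statement literally (translating $P$) yields the intersection $[\tfrac32\epsilon,(n+\tfrac12)\epsilon+\epsilon^2] = -\,\xtrans$; the paper's own proof---and its later invocation in the main theorem, where $B\supset T$ is the translated set---tacitly translates $T$ rather than $P$, which accounts for the sign.
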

\begin{proof}
We show the contrapositive. Therefore, assume the converse, i.e., that $\tau_x^*$ is not contained in  $\xtrans$.
If $\tau_x^* < -(n+\frac{1}{2})\epsilon - \epsilon^2$, then $-1-\frac{1}{2}\epsilon-(-2+n\epsilon+\epsilon^2+\tau_x^*) > 1$ and thus the left part of $T$ cannot contain any point of $P$ at distance at most $1$. If $\tau_x^* > -\frac{3}{2}\epsilon$, then $2+2\epsilon+\tau_x^*-(1+\frac{1}{2}\epsilon) > 1$ and thus the right part of $T$ cannot contain any point of $P$ at distance at most $1$. Thus, $\dhausdorfft(T,P) > 1$.
\end{proof}

\paragraph*{Undirected Gadget}
To ensure that each point in $A$ can be matched to a point in $B$ within distance $1$, we add auxiliary points to $B$. The Undirected Gadget is defined by the point set
\[
	U \coloneqq \{ (-\frac{1}{2}, 0), (\frac{1}{2}, 0) \}.
\]

\begin{lemma} \label{lem:undirected_gadget}
	Given a set of points $P \subset [-1-\frac{1}{2}\epsilon, 1+\frac{1}{2}\epsilon] \times [-\frac{1}{8}, \frac{1}{8}]$, it holds that $\dhausdorff(P, U + \tau) \leq 1$ for any $\tau \in [-(n+\frac{1}{2})\epsilon-\epsilon^2, (n+\frac{1}{2})\epsilon+\epsilon^2] \times \ytrans$.
\end{lemma}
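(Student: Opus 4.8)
The plan is to prove the bound directly from the definition of the directed Hausdorff distance: it suffices to show that every point $p \in P$ has a point of $U + \tau$ within \lone-distance $1$. Since $U + \tau = \{(\tau_x - \tfrac12, \tau_y),\, (\tau_x + \tfrac12, \tau_y)\}$, these are two points on a common horizontal line whose $x$-coordinates are symmetric about $\tau_x$ at distance $\tfrac12$. For a fixed $p = (p_x, p_y)$ I would pick the point $u$ of $U + \tau$ that lies on the same side of the vertical line $\{x = \tau_x\}$ as $p$; concretely, $u = (\tau_x + s\tfrac12, \tau_y)$ with $s = 1$ if $p_x \ge \tau_x$ and $s = -1$ otherwise. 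With this choice, $\abs{p_x - u_x} = \bigl\lvert\, \abs{p_x - \tau_x} - \tfrac12 \,\bigr\rvert$.

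The two coordinates are then bounded separately. For the $y$-coordinate, both $p_y$ and $\tau_y$ lie in \ytrans, so $\abs{p_y - \tau_y} \le \tfrac14$. For the $x$-coordinate, from $\abs{p_x} \le 1 + \tfrac12\epsilon$ and $\abs{\tau_x} \le (n+\tfrac12)\epsilon + \epsilon^2$ we get $\abs{p_x - \tau_x} \le R$, where $R \coloneqq 1 + (n+1)\epsilon + \epsilon^2$; since $R \ge 1 \ge \tfrac12$, this yields $\bigl\lvert\, \abs{p_x - \tau_x} - \tfrac12\,\bigr\rvert \le R - \tfrac12$. Adding the two bounds gives $\norm{p - u}_1 \le R - \tfrac14 = \tfrac34 + (n+1)\epsilon + \epsilon^2$.

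Finally I would check that $\tfrac34 + (n+1)\epsilon + \epsilon^2 \le 1$, i.e.\ that $(n+1)\epsilon + \epsilon^2 \le \tfrac14$, which holds comfortably for $\epsilon = \tfrac1{20mnd}$ (and indeed for any sufficiently small $\epsilon$). As $p \in P$ was arbitrary, $\dhausdorff(P, U+\tau) \le 1$ follows.

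The only point that needs care is the side-dependent choice of $u$ in the first step: choosing the nearer of the two points on the $x$-axis is exactly what turns the crude estimate $\abs{p_x - u_x} \le R + \tfrac12$ (which is useless, being larger than $1$) into the usable bound $R - \tfrac12$. Once that observation is in place, the rest is a routine interval computation.
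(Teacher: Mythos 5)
Your proof is correct and takes essentially the same approach as the paper: choose the nearer of the two points of $U+\tau$ and bound the $x$- and $y$-coordinates separately. The only cosmetic difference is the case split — you split on the sign of $p_x - \tau_x$, while the paper splits on the sign of $p_x$ and appeals to symmetry to handle just one half; both yield the same estimate since the shift by $\tau_x$ is of order $\epsilon$ and thus negligible against the constants.
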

\begin{proof}
By symmetry, we can restrict to proving that the distance of the point set
\[
	P' = P \cap [0, 1+\frac{1}{2}\epsilon] \times \ytrans
\]
to $(\frac{1}{2},0) + \tau$ is at most $1$. For any $p' \in P'$, we have $\abs{p'_x - (\frac{1}{2} + \tau_x)} \leq \frac{1}{2} + (n+\frac{1}{2})\epsilon+\epsilon^2 \leq \frac{1}{2} + \frac{1}{10}$, where the last inequality follows from plugging in $\varepsilon = \frac{1}{20mnd}$, and also $\abs{p'_y - \tau_y} \leq \frac{1}{4}$. Thus, $\norm{p' - ((\frac{1}{2},0) + \tau)}_1 \leq \frac{3}{4} + \frac{1}{10} < 1$.
\end{proof}

\subsection{Reduction and correctness}
We now describe the reduction and prove its correctness. We construct the point sets of our Hausdorff distance under translation instance as follows. The first set, i.e., set $A$, consists only of Vector Gadgets:
\[
	A \coloneqq \left(\bigcup_{i \in [m]} V(x_i) + (-1-\frac{1}{2}\epsilon, i \cdot 2d\epsilon)\right) \cup \left(\bigcup_{i \in [m]} V(1^d) + (1+\frac{1}{2}\epsilon, i \cdot 2d\epsilon)\right)
\]
The second set, i.e., set $B$, consists of Vector Gadgets, the Translation Gadget, and the Undirected Gadget:
\[
	B \coloneqq \left(\bigcup_{j \in [n]} \Vbar(y_j) + (j\epsilon, 0)\right) \cup T \cup U
\]
See Figure \ref{fig:lp_lb_overview} for a sketch of the above construction.
To reference the vector gadgets as they are used in the reduction, we use the notation
\[
	V_r(x_i) \coloneqq V(x_i) + (-1-\frac{1}{2}\epsilon, i \cdot 2d\epsilon) \quad\text{and}\quad \Vbarr(y_j) \coloneqq \Vbar(y_j) + (j\epsilon, 0).
\]

We can now prove correctness of our reduction. In the reduction, we return some canonical positive instance, if the $0^d$ vector is contained in any of the two OV sets. This allows us to drop all $1^d$ vectors from the input, as they cannot be orthogonal to any other vector. Thus, we can assume that all vectors in our input contain at least one 0-entry and at least one 1-entry.

\begin{theorem}\label{thm:ov_hardness}
	Computing the directed or undirected Hausdorff distance under translation in \lone or \linf for two point sets of size $n$ and $m$ in the plane cannot be solved in time $\Oh((mn)^{1-\gamma})$ for any $\gamma > 0$, unless the Orthogonal Vectors Hypothesis fails.
\end{theorem}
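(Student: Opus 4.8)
The plan is to establish the correctness of the reduction, namely that
\[
\hausdorfft(A,B)\le 1 \iff \dhausdorfft(B,A)\le 1 \iff \exists\, i\in[m],\ j\in[n]\colon x_i\cdot y_j = 0 .
\]
Since $\abs{A}=2md$, $\abs{B}=nd+2d+2$, and the pair $A,B$ (together with $\delta=1$) is produced in time $\Oh((n+m)d)$ — including the preprocessing that returns a canonical instance if $0^d$ occurs in $X$ or $Y$ and otherwise discards all $1^d$ vectors, so that afterwards every input vector has at least one $0$-entry and at least one $1$-entry — an $\Oh((NM)^{1-\gamma})$-time algorithm for the directed or undirected Hausdorff distance under translation on point sets of sizes $N,M$ would decide \textsc{OV} in time $\Oh((mn)^{1-\gamma}\text{poly}(d))$; for the balanced case $m=n$ this contradicts OVH, and the general statement follows via the equivalence of OVH across (un)balanced regimes cited above. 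The $\linf$ bound follows from the $\lone$ bound by the $\tfrac{\pi}{4}$ rotation mentioned at the start of the section. Writing $\dhausdorfft(B,A)=\min_\tau\dhausdorff(B+\tau,A)$ and using $\dhausdorfft(B,A)\le\hausdorfft(A,B)$, it suffices to show: (i) if $x_i\cdot y_j=0$ for some $i,j$, then $\hausdorff(A,B+\tau)\le 1$ for a suitable $\tau$; and (ii) if $\dhausdorff(B+\tau,A)\le 1$ for some $\tau$, then some $x_i,y_j$ are orthogonal.

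For (i), given $x_i\cdot y_j=0$ I would take $\tau\coloneqq(-(j+\tfrac12)\epsilon,\ i\cdot 2d\epsilon)$ and bound both directed distances. For $\dhausdorff(A,B+\tau)$ I invoke Lemma~\ref{lem:undirected_gadget} with $P=A$: one checks that $A\subseteq[-1-\tfrac12\epsilon,1+\tfrac12\epsilon]\times[-\tfrac18,\tfrac18]$ (the dummy gadgets $V(1^d)$ have $x$-offset $0$, and the vertical extent is bounded using $\epsilon=\tfrac{1}{20mnd}$) and that $\tau$ lies in the admissible range, whence every point of $A$ is within distance $1$ of $U+\tau\subseteq B+\tau$. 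For $\dhausdorff(B+\tau,A)$ I treat the parts of $B+\tau$ in turn: the ``active'' gadget $\Vbarr(y_j)+\tau$ is, by the choice of $\tau$, vertically aligned at distance $1$ with $V_r(x_i)$, so Lemma~\ref{lem:vector_gadget} together with $x_i\cdot y_j=0$ yields $\hausdorff(V_r(x_i),\Vbarr(y_j)+\tau)\le 1$; every other $\Vbarr(y_{j'})+\tau$ and each of the two halves of $T+\tau$ is matched index-for-index to $V_r(x_i)$ (when it lies to its left) or to $V(1^d)+(1+\tfrac12\epsilon, i\cdot 2d\epsilon)$ (when it lies to its right), where the horizontal offset is at most $1$ and the vertical offset is $0$; and the two points of $U+\tau$ lie within distance $\tfrac12+\Oh(n\epsilon)<1$ of the left resp.\ right column of $A$. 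These are routine estimates.

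Direction (ii) is the crux. Assume $\dhausdorff(B+\tau,A)\le 1$. Since $T\subseteq B$ and the $x$-coordinates of $A$ lie in $[-1-\tfrac12\epsilon,1+\tfrac12\epsilon]$, Lemma~\ref{lem:translation_gadget} gives $\tau_x\in\xtrans$. The key step is to pin $\tau_x$ down: $A$ has points only near $x=-1$ (the gadgets $V_r(x_{i'})$) and near $x=1$ (the dummy gadgets), so each $\Vbarr(y_{j'})+\tau$, sitting near $x=0$, can be matched within distance $1$ only to one of these two ``columns'', and a short computation shows that this is possible only if $\tau_x\le-(j'+\tfrac12)\epsilon$ (left column) or $\tau_x\ge-(j'-\tfrac12)\epsilon$ (right column). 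Imposing this for every $j'\in[n]$ confines $\tau_x$ to the $n$ values $-(j+\tfrac12)\epsilon$ together with a sub-interval of length $\epsilon^2$ just below $-(n+\tfrac12)\epsilon$; that sub-interval is excluded because there the left half of $T+\tau$ could only be matched index-preservingly to $1$-bits of a single $V_r(x_{i'})$, forcing $x_{i'}=1^d$, contrary to the preprocessing. Hence $\tau_x=-(j^*+\tfrac12)\epsilon$ for some $j^*\in[n]$. With $\tau_x$ so pinned, the gadget $\Vbarr(y_{j^*})+\tau$ becomes tight: every point of it indexed by a $k$ with $y_{j^*}[k]=1$ is at horizontal distance exactly $1$ from the left column and can only be matched to a point of some $V_r(x_{i^*})$ indexed by a $k'$ with $x_{i^*}[k']=0$; distance exactly $1$ in the first coordinate forces zero offset in the second, hence $k'=k$ and $\tau_y=i^*\cdot 2d\epsilon$, so $y_{j^*}[k]=1\Rightarrow x_{i^*}[k]=0$, i.e.\ $x_{i^*}\cdot y_{j^*}=0$.

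The main obstacle is precisely this pinning-down step in (ii): one must rule out all ``off-grid'' translations, which requires carefully separating the two length scales of the construction — the $\epsilon$-scale spacing that selects the indices $i^*,j^*$ from the $\epsilon^2$-scale perturbations that encode the input bits — handling the boundary sub-cases of the window $\xtrans$ where the left half of $T$ (or the last $Y$-gadget) is the binding constraint, and arguing that all $d$ points of a gadget must be matched to a single ``cluster'' of $A$ in an index-preserving manner. Direction (i) and the complexity bookkeeping are routine by comparison.
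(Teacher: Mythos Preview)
Your plan is correct and follows essentially the same route as the paper. Direction~(i) and the complexity bookkeeping are identical to the paper's argument. For direction~(ii), both you and the paper first confine $\tau_x$ via Lemma~\ref{lem:translation_gadget} and then establish that some gadget $\Vbarr(y_{j^*})+\tau$ sits at horizontal distance exactly~$1$ from the left column, which forces vertical alignment and hence orthogonality through Lemma~\ref{lem:vector_gadget}. The paper phrases this via a \emph{split index}~$j$ (the largest index whose gadget still matches the left column) and derives tightness from the constraint that gadget $j{+}1$ must reach the right column; you phrase it as an intersection of per-gadget constraints $\tau_x\le -(j'+\tfrac12)\epsilon$ or $\tau_x\ge -(j'-\tfrac12)\epsilon$, which collapses to the discrete grid $\{-(j+\tfrac12)\epsilon\}$ plus the residual sub-interval at the left end. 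These are the same argument in different clothing.

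One genuine (and correct) difference is the handling of that residual sub-interval $[-(n+\tfrac12)\epsilon-\epsilon^2,\,-(n+\tfrac12)\epsilon)$. The paper disposes of the $j=n$ case using that each $y_j$ contains a $0$-entry, which absorbs the $\epsilon^2$ slack coming from the Translation Gadget bound. You instead observe that in this sub-interval the left half of $T+\tau$ (all of whose points carry the $\epsilon^2$ offset) is at horizontal distance strictly more than~$1$ from the $0$-bit points of the left column and can therefore only be matched, index-preservingly, to $1$-bit points of a single $V_r(x_{i'})$, forcing $x_{i'}=1^d$ and contradicting the preprocessing. Both arguments are valid; yours has the minor advantage of being symmetric to how the right half of $T$ is used in Lemma~\ref{lem:translation_gadget}. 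A small caveat: your per-gadget bounds are only exact up to $\Oh(\epsilon^2)$ (depending on which bit pattern is available in the left column at the relevant vertical level), so $\tau_x$ is really pinned to an $\Oh(\epsilon^2)$-neighbourhood of each grid point rather than to the point itself---but since $\epsilon^2\ll\epsilon$, this still forces the index-preserving vertical alignment you need, and your concluding paragraph already anticipates this scale separation.
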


\begin{proof}
Recall that we only have to consider the \lone case.
We first prove that there is a pair of orthogonal vectors $x \in X$ and $y \in Y$ if and only if $\hausdorfft(A, B) \leq 1$.
To prove the theorem for the directed and undirected Hausdorff distance under translation at the same time, it suffices to show \enquote{$\Rightarrow$} for the undirected version and \enquote{$\Leftarrow$} for the directed version.

\begin{description}
\item[$\mathbf{\Rightarrow}$:] Assume that there exist $x_i \in X$, $y_j \in Y$ with $x_i \cdot y_j = 0$.
	Then consider the translation $\tau = (-(j+\frac{1}{2})\epsilon, i \cdot 2d\epsilon)$ which vertically aligns the Vector Gadgets $V_r(x_i)$ and $\Vbarr(y_j) + \tau$ at distance $1$.
	As $x_i$ and $y_j$ are orthogonal, it follows from Lemma \ref{lem:vector_gadget} that $\dhausdorff(\Vbarr(y_j)+\tau, A) \leq 1$.
We now show that all of the remaining points of $B + \tau$ have a point of $A$ at distance at most~$1$.
The Vector Gadgets $\Vbarr(y_{j'}) + \tau$ with $j' < j$ are strictly to the left of $\Vbarr(y_j) + \tau$ and are thus also in Hausdorff distance at most $1$ from $V_r(x_i)$.
If $j=n$, then we are done with the Vector Gadgets. Otherwise, consider the Vector Gadget $\Vbarr(y_{j+1}) + \tau$. We claim that each point of it is at distance at most $1$ from $V(1^d) + (1+\frac{1}{2}\epsilon, i \cdot 2d\epsilon)$. As the two gadgets are vertically aligned, we just have to check their horizontal distance, which is
\[
1+\frac{1}{2}\epsilon - ((j+1)\epsilon - (j+\frac{1}{2})\epsilon) = 1.
\]
Thus, by Lemma \ref{lem:vector_gadget}, we have $\dhausdorff(\Vbarr(y_{j+1})+\tau, A) \leq 1$.
Now, by the same argument as above, all gadgets $\Vbarr(y_{j'}) + \tau$ with $j' > j+1$ are in directed Hausdorff distance at most $1$ from $A$.

As the points of the Undirected Gadget $U+\tau$ are closer by a distance of almost $\frac{1}{2}$ to $A$ than the Vector Gadgets in $B+\tau$, also $\dhausdorff(U+\tau, A) \leq 1$ holds.
Finally, we have to show that the Translation Gadget $T+\tau$ is at distance at most $1$ from $A$. As the left part of $T$ and $V_r(x_i)$ are aligned vertically, we only have to check the horizontal distance. The horizontal distance is
\[
-1-\frac{1}{2}\epsilon -(-2+n\epsilon - (j+\frac{1}{2})\epsilon) = 1-(n-j)\epsilon \leq 1
\]
for any $j \in [n]$. Similarly, the distance of the right part of the Translation Gadget from the vertically aligned $V(1^d)$ in $A$ is
\[
	2+2\epsilon-(j+\frac{1}{2})\epsilon-(1+\frac{1}{2}\epsilon) = 1-(j-1)\epsilon \leq 1
\]
for any $j \in [n]$. Thus, by Lemma \ref{lem:vector_gadget} and Lemma \ref{lem:vector_gadget_swapped}, it holds that $\dhausdorff(T+\tau,A) \leq 1$.
As $\tau \in \trans$, we know by Lemma \ref{lem:undirected_gadget} that $\dhausdorff(A, B + \tau) \leq 1$ and thus also $\hausdorfft(A, B) \leq 1$.

\medskip

\item[$\mathbf{\Leftarrow}$:]
Now, assume that $\hausdorfft(A, B) \leq 1$ and let $\tau$ be any translation for which $\dhausdorff(B+\tau, A) \leq 1$. Note that we used the directed Hausdorff distance in the previous statement on purpose, as we prove hardness for both versions. Lemma \ref{lem:translation_gadget} implies that $\tau_x \in \xtrans$.

Let $\Vbarr(y_j) + \tau, \Vbarr(y_{j+1}) + \tau$ be the Vector Gadgets such that $\Vbarr(y_j) + \tau$ has directed Hausdorff distance at most $1$ to the left Vector Gadgets of $A$ and $\Vbarr(y_{j+1}) + \tau$ has directed Hausdorff distance at most $1$ to the right Vector Gadgets of $A$. This is well-defined as the left Vector Gadgets of $A$ and the right Vector Gadgets of $A$ are at distance at least $2+\epsilon-\epsilon^2$ from each other, and thus no Vector Gadget of $B + \tau$ can be at distance at most $1$ from both sides. Furthermore, as $\tau_x \leq -\frac{3}{2}\epsilon$, the Vector Gadget $\Vbarr(y_j) + \tau$ has directed Hausdorff distance at most $1$ to the left Vector Gadgets of $A$, as
\[
j\epsilon-\frac{3}{2}\epsilon - (-1-\frac{1}{2}\epsilon) = 1 + (j-1)\epsilon \leq 1
\]
for $j = 1$. If $j=n$, then $\Vbarr(y_{j+1}) + \tau$ is undefined.

As $\dhausdorff(B+\tau, A) \leq 1$, we know that $\Vbarr(y_j) + \tau$ has directed Hausdorff distance at most $1$ to a gadget $V_r(x)$ for some $x \in X$. We claim that this distance cannot be closer than $1$ as $\Vbarr(y_{j+1}) + \tau$ must have a directed Hausdorff distance at most $1$ from the right side of $A$ or, in case $j=n$, due to the restrictions imposed by the Translation Gadget.
Let us consider the case $j \neq n$ first. Any translation $\tau'$ which places $\Vbarr(y_{j+1})+\tau'$ in directed Hausdorff distance at most $1$ from the right side of $A$ needs to fulfill
\[
1 + \frac{1}{2}\epsilon - ((j+1)\epsilon + \tau'_x) \leq 1
\]
and thus $\tau'_x \geq -(j+\frac{1}{2})\epsilon$, using the fact that each vector in $Y$ contains at least one $0$-entry.
This, on the other hand, implies that $\Vbarr(y_j) + \tau'$ is in Hausdorff distance at least
\[
j\epsilon - (j+\frac{1}{2})\epsilon - (- 1 - \frac{1}{2}\epsilon) = 1
\]
from $V_r(x)$.
Now consider the case $j = n$.
As by Lemma \ref{lem:translation_gadget} we have $\tau_x \geq -(n+\frac{1}{2})\epsilon - \epsilon^2$, it follows that $\Vbarr(y_n) + \tau$ is in Hausdorff distance at least
\[
n\epsilon - (n+\frac{1}{2})\epsilon - (-1-\frac{1}{2}\epsilon) = 1
\]
from $V_r(x)$, using the fact that each vector in $Y$ contains at least one $0$-entry (this is the reason why the $\epsilon^2$ disappears).

By the arguments above, the two gadgets $\Vbarr(y_j) + \tau$ and $V_r(x)$ have to be horizontally aligned as required by Lemma \ref{lem:vector_gadget}. They also have to be vertically aligned as a vertical deviation would incur a Hausdorff distance larger than $1$ for the pair of points in the two gadgets that are in horizontal distance $1$. Then, applying Lemma \ref{lem:vector_gadget}, it follows that $x$ and $y_j$ are orthogonal.
\end{description}

It remains to argue why the above reduction implies the lower bound stated in the theorem. Assume we have an algorithm that computes the Hausdorff distance under translation for \lone or \linf in time $(mn)^{1-\gamma}$ for some $\gamma > 0$. Then, given an Orthogonal Vectors instance $X,Y$ with $\abs{X}=m$ and $\abs{Y}=n$, we can use the described reduction to obtain an equivalent Hausdorff under translation instance with point sets $A, B$ of size $\abs{A} = \Oh(md)$ and $\abs{B} = \Oh(nd)$ and solve it in time $\Oh((mn)^{1-\gamma}\text{poly}(d))$, contradicting the Orthogonal Vectors Hypothesis.
\end{proof}

\subsection{Generalization to \lp}
We can extend the above construction such that it works for all \lp norms with $p \neq \infty$ by changing the spacing between $0$ and $1$ points of the Vector Gadgets and also set $\epsilon$ accordingly.
More precisely, we can set $\epsilon = \frac{1}{40pmnd}$ (instead of $\frac{1}{20mnd}$) and use $\epsilon^{2p}$ as spacing (instead of $\epsilon^2$), i.e., the Vector Gadget for a vector $v \in \{0,1\}^d$ then consists of the points $p_1, \dots, p_{d} \in \RR^2$:
\[
p_i =
\begin{cases}
	(\epsilon^{2p}, i\epsilon), & \text{if } v[i] = 0 \\
	(0, i\epsilon), & \text{if } v[i] = 1 \\
\end{cases}
\]
We prove that these modifications suffice in the remainder of this section.

To this end, first note that in the proof of Theorem \ref{thm:ov_hardness}, the proof for \enquote{$\Rightarrow$} for \lp already follows from the \lone case as the \lone norm is an upper bound on all \lp norms. Thus, we only have to modify the proof of \enquote{$\Leftarrow$}.
To show \enquote{$\Leftarrow$}, note that the only place where we use the \lone norm in the proof is in the invocation of Lemma~\ref{lem:vector_gadget}. Otherwise, we only argue via distances with respect to a single dimension, which carries over to \lp as $\|(x,0)\|_p = |x|$. Thus, we now prove Lemma~\ref{lem:vector_gadget} for the general \lp case.

\begin{proof}[\boldmath Proof of Lemma \ref{lem:vector_gadget} for $L_p$]
To adapt the proof of Lemma \ref{lem:vector_gadget} to the \lp case, we only have to argue that we cannot match any $p_i, q_j$ for $i \neq j$, as the remaining arguments merely argue about distances in a single dimension. We have that
\[
	\norm{p_i - q_j}_p = \left( (\abs{i-j}\epsilon)^p + (1 - (v_1[i]+v_2[j]-1)\epsilon^{2p})^p \right)^{1/p} \geq \left( \epsilon^p + (1 - \epsilon^{2p})^p \right)^{1/p},
\]
which is greater than 1 if $\epsilon^p + (1 - \epsilon^{2p})^p > 1$, which we obtain by using Bernoulli's inequality:
\[
\epsilon^p + (1 - \epsilon^{2p})^p \geq \epsilon^p + 1 - p\epsilon^{2p} \geq 1 + \left(\frac{1}{40pmnd}\right)^p - p\left(\frac{1}{40pmnd}\right)^{2p} > 1.
\]
The remainder of the proof is analogous to the remainder of the proof of Lemma \ref{lem:vector_gadget}.
\end{proof}

By all of the above arguments, the following theorem follows.
\begin{theorem}[Theorem \ref{thm:ov_hardness} for \lp]
Computing the directed or undirected Hausdorff distance under translation in \lp for two point sets of size $n$ and $m$ in the plane cannot be solved in time $\Oh((mn)^{1-\gamma})$ for any $\gamma > 0$, unless the Orthogonal Vectors Hypothesis fails.
\end{theorem}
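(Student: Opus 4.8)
The plan is to run the reduction from the proof of Theorem~\ref{thm:ov_hardness} essentially verbatim, the only change being the modified Vector Gadget described above (perturbation $\epsilon^{2p}$ in place of $\epsilon^2$, and $\epsilon = \frac{1}{40pmnd}$ in place of $\frac{1}{20mnd}$). The reason this is cheap is that the correctness proof of Theorem~\ref{thm:ov_hardness} is \emph{almost one-dimensional}: every inequality appearing in it --- the placement bounds for the Translation Gadget, the horizontal-distance computations between vertically aligned Vector Gadgets, and the Undirected Gadget estimate --- concerns either a single coordinate or a quantity dominated by a single coordinate, and $\norm{(x,0)}_p = \abs{x}$ for every $p$. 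Hence all of these carry over to \lp without change, and the Undirected Gadget estimate (which is stated in \lone) only becomes easier because $\norm{\cdot}_p \le \norm{\cdot}_1$. The single genuinely two-dimensional ingredient is Lemma~\ref{lem:vector_gadget}, whose \lp version has just been established via Bernoulli's inequality.

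Concretely, I would treat the two directions separately. For \enquote{$\Rightarrow$} there is nothing new to prove: given an orthogonal pair $x_i \cdot y_j = 0$, the same translation $\tau = (-(j+\frac{1}{2})\epsilon, i \cdot 2d\epsilon)$ as in Theorem~\ref{thm:ov_hardness} witnesses $\hausdorff(A, B+\tau) \le 1$ in \lone for the modified construction --- one only has to re-check that the \lone version of Lemma~\ref{lem:vector_gadget} still certifies distance at most $1$ when the perturbation is $\epsilon^{2p}$, which it does since matched pairs $p_i, q_i$ then lie at \lone-distance $1 + (v_1[i] + v_2[i] - 1)\epsilon^{2p} \le 1$ whenever not both bits are $1$ --- and then $\norm{\cdot}_p \le \norm{\cdot}_1$ gives $\hausdorff(A, B+\tau) \le 1$ in \lp too. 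For \enquote{$\Leftarrow$} I would repeat the argument of Theorem~\ref{thm:ov_hardness} line by line: Lemma~\ref{lem:translation_gadget} confines $\tau_x$ to the corresponding interval (with $\epsilon^2$ read as $\epsilon^{2p}$ in the definition of $T$, which does not affect that proof since it uses only single-coordinate distances), the same \enquote{no Vector Gadget of $B+\tau$ can be close to both sides of $A$} bookkeeping isolates a unique boundary gadget $\Vbarr(y_j)+\tau$, the same horizontal-distance inequalities force $\Vbarr(y_j)+\tau$ to be aligned with some $V_r(x)$ exactly rather than at distance below $1$, a vertical deviation is still ruled out (at horizontal distance $1$ a vertical offset $\delta>0$ yields \lp-distance $(1+\delta^p)^{1/p}>1$), and finally the \lp version of Lemma~\ref{lem:vector_gadget} yields $x \cdot y_j = 0$.

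The size and running-time accounting is identical to Theorem~\ref{thm:ov_hardness}: $\abs{A} = \Oh(md)$, $\abs{B} = \Oh(nd)$, and the reduction is computable in polynomial time, so an $\Oh((mn)^{1-\gamma})$-time algorithm for \lp Hausdorff distance under translation would solve OV in time $\Oh((mn)^{1-\gamma}\,\text{poly}(d))$, contradicting OVH. The only point I expect to require genuine care is the quantitative slack inside the \lp Vector Gadget: one needs $\epsilon^p + (1-\epsilon^{2p})^p > 1$, i.e.\ the \enquote{forbidden} shift of $\epsilon^p$ in one coordinate must not be cancelled by the $\Oh(p\epsilon^{2p})$ loss incurred when passing through Bernoulli's inequality --- which is exactly why $\epsilon$ is taken as small as $\frac{1}{40pmnd}$ --- and one should also double-check that this smaller $\epsilon$ (together with the replacement of $\epsilon^2$ by $\epsilon^{2p}$) still satisfies the numerical inequalities used in Lemmas~\ref{lem:translation_gadget} and~\ref{lem:undirected_gadget}, which it does since those only require $\epsilon$ to lie below an absolute constant.
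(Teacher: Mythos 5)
Your proposal is correct and takes essentially the same route as the paper: modify the Vector Gadget to use spacing $\epsilon^{2p}$ with $\epsilon = \frac{1}{40pmnd}$, observe that ``$\Rightarrow$'' is inherited from $L_1$ via $\norm{\cdot}_p \le \norm{\cdot}_1$, note that ``$\Leftarrow$'' uses the $L_1$ norm only inside Lemma~\ref{lem:vector_gadget} while all other estimates are effectively one-coordinate (so $\norm{(x,0)}_p = \abs{x}$ applies), and re-prove Lemma~\ref{lem:vector_gadget} for $L_p$ via Bernoulli's inequality. You additionally make explicit two small points that the paper treats implicitly --- that the interval guaranteed by Lemma~\ref{lem:translation_gadget} picks up an $\epsilon^{2p}$ in place of $\epsilon^2$ because that term originates in the Vector Gadget spacing, and that the vertical-alignment step survives in $L_p$ because $(1+\delta^p)^{1/p} > 1$ for $\delta > 0$ --- both of which are accurate and a useful tightening of the paper's ``only single-dimension arguments remain'' phrasing.
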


 \section{\boldmath \threesum based $n^{2-o(1)}$ lower bound for $m \in \Oh(1)$}\label{sec:3sum_lb}
We now present a hardness result for the unbalanced case of the directed and undirected Hausdorff distance under translation.
We base our hardness on another popular hypothesis of fined-grained complexity theory: the \threesum Hypothesis.
Before stating the hypothesis, let us first introduce the \threesum problem.\footnote{Note that we do not explicitly restrict the universe of the integers here. In the WordRAM model, we use the standard assumption that each integer in the input has bit complexity $\Oh(\log n)$. In the RealRAM model, we can perform the common arithmetic operations on reals in constant time, so there is no need to restrict the universe. With these conventions, our reduction works in both models.}
\begin{definition}[\threesum]
	Given three sets of positive integers $X, Y, Z$ all of size $n$, do there exist $x \in X, y \in Y, z \in Z$ such that $x + y = z$?
\end{definition}
The corresponding hardness assumption is the \threesum Hypothesis.
\begin{definition}[\threesum Hypothesis]
	There is no $\Oh(n^{2-\epsilon})$ algorithm for \threesum for any $\epsilon > 0$.
\end{definition}
There are several equivalent variants of the \threesum problem. Most important for us is the convolution \threesum problem, abbreviated as \convthreesum \cite{3sum_to_conv3sum,DBLP:conf/soda/ChanH20}.
\begin{definition}[Conv3SUM]
	Given a sequence of positive integers $X = (x_0, \dots, x_{n-1})$ of size $n$, do there exist $i,j$ such that $x_i + x_j = x_{i+j}$?
\end{definition}
This problem has a trivial $\Oh(n^2)$ algorithm and, assuming the \threesum Hypothesis, this is also optimal up to lower order factors. As \threesum and \convthreesum are equivalent, a lower bound conditional on \convthreesum implies a lower bound conditional on \threesum.

\begin{figure}
	\centering
	\includegraphics[width=\textwidth]{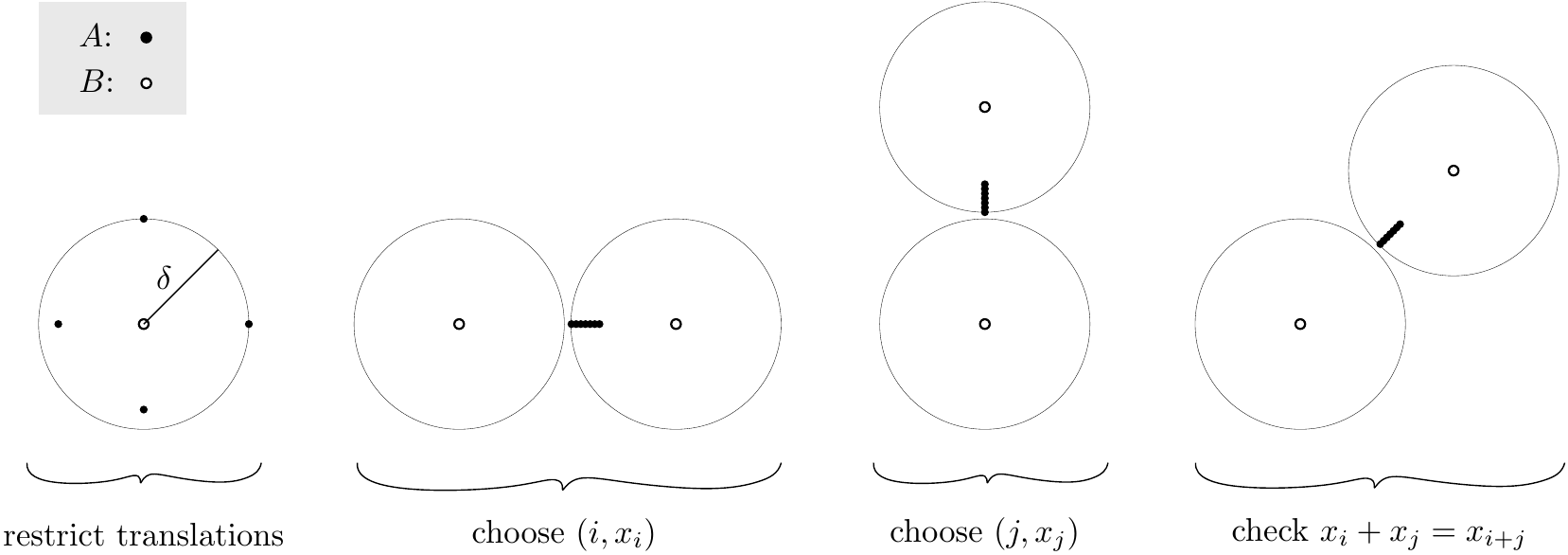}
	\caption{Sketch of the reduction from \convthreesum to the directed and undirected Hausdorff distance under translation in the Euclidean plane.}
	\label{fig:l2_lb_overview}
\end{figure}

Therefore, given a \convthreesum instance defined by the sequence of integers $X$ with $\abs{X} = n$, we create an equivalent instance of the directed Hausdorff distance under translation for $\ltwo$ by constructing two sets of points $A$ and $B$ with $\abs{A} = \Oh(n)$ and $\abs{B} = \Oh(1)$ and providing a decision distance $\delta$.
We provide some intuition for the reduction in the following. See Figure~\ref{fig:l2_lb_overview} for an overview.
Intuitively, we define a low-level gadget from which we build three separate high-level gadgets by rotation and scaling. Recall that in the \convthreesum problem we have to find values $i,j$ which fulfill the equation $x_i + x_j = x_{i+j}$.
Intuitively, we encode the choice of these two values into the two dimensions of the translation: the horizontal translation chooses the pair $(i, x_i)$ in the first high-level gadget and the vertical translation chooses the pair $(j, x_j)$ in the second high-level gadget. The third high-level gadget then allows for a Hausdorff distance below the threshold iff the chosen $i$ and $j$ fulfill the \convthreesum constraint $x_i + x_j = x_{i+j}$. To make this construction also work for the directed Hausdorff distance under translation, we add a simple gadget that restricts translations.
In the remainder of this section, we present the details of our reduction and prove that it implies the claimed lower bound.

\subsection{Construction}

Given a \convthreesum instance with $X \subset [M]$ where $n = \abs{X}$, we now describe the construction of the Hausdorff distance under translation instance with point sets $A, B$ and threshold distance $\delta$.
We use a small enough $\epsilon$, e.g., $\epsilon = (4Mn^2)^{-4}$, as value for microtranslations. Furthermore, we set $\delta = 1 + 4 n^2 \epsilon^2$. The additional $4 n^2 \epsilon^2$ term compensates for the small variations in distance that occur on microtranslations due to the curvature of the $\ltwo$-ball.

\subsubsection{Low-level gadget}

\begin{figure}
	\centering
	\includegraphics{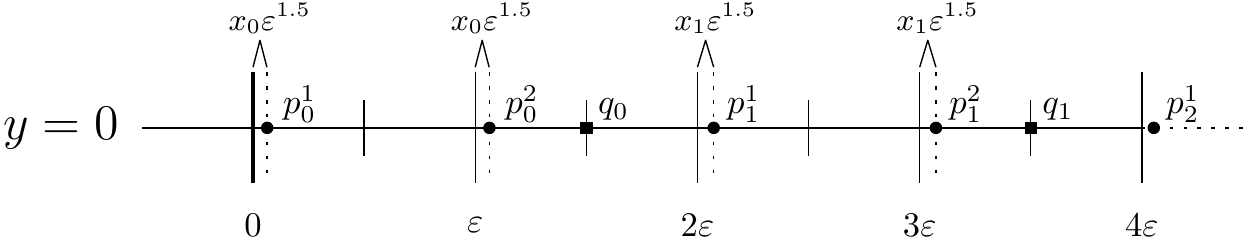}
	\caption{The $A$ set of the low-level gadget of the \threesum reduction, which is used to build the high-level gadgets. We just show the leftmost part of the gadget, but the remainder is similar.}
	\label{fig:low_level_gadget}
\end{figure}

We use a single low-level gadget, which is then scaled and rotated to obtain high-level gadgets. This gadget consists of two point sets $A_l$ and $B_l$. The point set $A_l$ contains what we call \emph{number points} $p_i^1, p_i^2$ and \emph{filling points} $q_i$ for $0 \leq i < n$. The set $B_l$ just contains two points: $r_1$ and $r_2$.
The number points $p_i^1, p_i^2$ encode the number $x_i$, while the filling points make sure that no other translations than the desired ones are possible. See Figure \ref{fig:low_level_gadget} for an overview. All of the points in this gadget are of the form $(x,0)$. The number points are
\[
	p_i^1 = \left(2i\epsilon + x_i \epsilon^{1.5},0\right), \quad p_i^2 = p_i^1 + \left(\epsilon, 0\right)
\]
for $0 \leq i < n$. The filling points are
\[
	q_i = \left(\left(2i + \frac{3}{2}\right)\epsilon, 0\right)
\]
for $0 \leq i < n$.

The points in $B_l$ should introduce a gap to only allow alignment of the number gadgets such that the microtranslations (i.e., those in the order of $\epsilon^{1.5}$) correspond to the number of the gap in the number gadget. To this end, $B_l$ contains the points
\[
	r_1 = (-1, 0),\quad r_2 = (1+\epsilon, 0).
\]

Before we prove properties of the low-level gadget, we first prove that the error due to the curvature of the $\ltwo$-ball is small.
\begin{lemma} \label{lem:error}
	Let $(p_x, p_y), (q_x, q_y) \in \RR^2$ be two points with $\abs{p_x-q_x} \in [\frac{1}{2}, 2]$ and $p_y = q_y$. For any $\tau \in [0,(2n-1)\epsilon]^2$, we have
\[
	\abs{p_x - (q_x + \tau_x)} \leq \norm{p - (q+\tau)}_2 \leq \abs{p_x - (q_x + \tau_x)} + 4  n^2 \epsilon^2.
\]
\end{lemma}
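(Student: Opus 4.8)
The plan is to reduce the statement to a one-dimensional estimate and then control the second-order term coming from the curvature of the Euclidean ball. Since $p_y = q_y$, set $a \coloneqq \abs{p_x - (q_x + \tau_x)}$ and $b \coloneqq \abs{\tau_y}$; then $\norm{p - (q+\tau)}_2 = \sqrt{a^2 + b^2}$ and the assertion becomes $a \le \sqrt{a^2+b^2} \le a + 4n^2\epsilon^2$. The left inequality is immediate from $b^2 \ge 0$. For the right inequality I would use the identity $\sqrt{a^2+b^2} - a = b^2/(\sqrt{a^2+b^2}+a) \le b^2/(2a)$, valid because $a \ge 0$, so it suffices to prove $b^2/(2a) \le 4n^2\epsilon^2$.

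The next step is to bound $b$ from above and $a$ from below using $\tau \in [0,(2n-1)\epsilon]^2$. For $b$ this is direct: $b = \abs{\tau_y} \le (2n-1)\epsilon$, hence $b^2 \le (2n-1)^2\epsilon^2$. For $a$ I would apply the reverse triangle inequality, $a = \abs{(p_x - q_x) - \tau_x} \ge \abs{p_x - q_x} - \abs{\tau_x} \ge \tfrac12 - (2n-1)\epsilon$, where only the lower bound $\abs{p_x - q_x} \ge \tfrac12$ of the hypothesis is used (the upper bound $\abs{p_x-q_x}\le 2$ is not needed). Combining the two estimates gives $b^2/(2a) \le (2n-1)^2\epsilon^2 / \bigl(1 - 2(2n-1)\epsilon\bigr)$.

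It then remains to verify that this last expression is at most $4n^2\epsilon^2$, \ie that $(2n-1)^2 \le 4n^2\bigl(1 - 2(2n-1)\epsilon\bigr)$, equivalently $8n^2(2n-1)\epsilon \le 4n-1$. Here the point is that $(2n-1)^2 = 4n^2 - (4n-1)$, so the leading $4n^2$ terms cancel and one is left with a comfortable slack of $4n-1 \ge 3$ on the right-hand side; plugging in $\epsilon = (4Mn^2)^{-4} \le (4n^2)^{-4}$ (using $M \ge 1$), the left-hand side is at most $16n^3\epsilon \le \tfrac{1}{16}$, which is below $4n-1$, so the inequality holds.

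I do not expect a genuine obstacle, but the one subtlety to respect is that the target error $4n^2\epsilon^2$ is only marginally larger than the crude bound $b^2/(2a) \approx b^2 \le (2n-1)^2\epsilon^2$ one obtains by treating $a \approx \tfrac12$; the argument must therefore keep $(2n-1)^2$ separate from $4n^2$ and rely on the minuscule size of $\epsilon$ to absorb the $\tau_x$-perturbation of the denominator $1 - 2(2n-1)\epsilon$, rather than rounding $(2n-1)^2$ up to $4n^2$ prematurely.
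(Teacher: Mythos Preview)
Your proof is correct and follows essentially the same route as the paper: both reduce to the bound $\sqrt{a^2+b^2}-a \le b^2/(2a)$ (the paper obtains it via $\sqrt{1+x}\le 1+\tfrac{x}{2}$, you via the conjugate identity) and then plug in $b \le (2n-1)\epsilon$ together with a lower bound on $a$. The only difference is that the paper simply asserts $a \ge \tfrac12$, whereas you more carefully use $a \ge \tfrac12 - (2n-1)\epsilon$ and then invoke the tiny size of $\epsilon$; this is a refinement in rigor rather than a different approach.
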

\begin{proof}
As each component is a lower bound for the $\ltwo$ norm, the first inequality follows. Thus, let us prove the second inequality. We first transform
\[
	\norm{p-(q+\tau)}_2 = \sqrt{(p_x-(q_x+\tau_x))^2 + \tau_y^2} = \abs{p_x-(q_x-\tau_x)} \sqrt{1 + \tau_y^2/(p_x-(q_x+\tau_x))^2}.
\]
As $\sqrt{1+x} \le 1+\frac{x}{2}$ for any $x \geq 0$, we have 
\[
\norm{p-(q+\tau)}_2 \le \abs{p_x-(q_x-\tau_x)} + \tau_y^2/(2\abs{p_x-(q_x-\tau_x)}).
\]
As $\tau_y \leq 2(n-1)\epsilon$ and $\abs{p_x-(q_x-\tau_x)} \geq \frac{1}{2}$, we obtain the desired upper bound.
\end{proof}
An analogous statement holds when swapping the $x$ and $y$ coordinates. Note that the $4  n^2 \epsilon^2$ term also occurs in the value of $\delta$ that we chose, as this is how we compensate for these errors in our construction. While we have to consider this error in the following arguments, it should already be conceivable that it will be insignificant due to its magnitude.

To this end, we use a compact notation to denote a value being in a certain range around a value. More concretely, for any $y,r \in \mathbb{R}$, let $x = y \pm r$ denote $x \in [y-r, y+r]$.
We now state two lemmas which show how the Hausdorff distance under translation decision problem is related to the structure of the low-level gadget.

\begin{lemma}\label{lem:stripe_lemma1}
	Given a low-level gadget $A_l, B_l$ as constructed above and the translation being restricted to $\tau \in [0, (2n-1)\epsilon]^2$, it holds that if $\dhausdorff(A_l, B_l+\tau) \leq \delta$, then
\[
	\exists i \in \mathbb{N}: \tau_x = 2i\epsilon + x_i \epsilon^{1.5} \pm 4 n^2 \epsilon^2.
\]
\end{lemma}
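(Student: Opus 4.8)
The plan is to reduce the (two-dimensional) Hausdorff condition to a one-dimensional statement about which of the two points $r_1+\tau$, $r_2+\tau$ each point of $A_l$ is matched to, and then read off the claim from the gap structure of $A_l$. Throughout write $\gamma \coloneqq 4n^2\epsilon^2 = \delta - 1$.

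First I would note that every point of $A_l$ lies on the $x$-axis with $x$-coordinate in $[0,(2n-\tfrac12)\epsilon] \subseteq [0,\tfrac12]$, so for $a = (a_x,0) \in A_l$ the horizontal separations $\abs{a_x - (r_1)_x} = a_x + 1$ and $\abs{a_x - (r_2)_x} = 1 + \epsilon - a_x$ both lie in $[\tfrac12, 2]$. Hence Lemma~\ref{lem:error} applies (with $\tau \in [0,(2n-1)\epsilon]^2$), giving $\norm{a - (r_1+\tau)}_2 \geq a_x + 1 - \tau_x$ and $\norm{a - (r_2+\tau)}_2 \geq 1 + \epsilon + \tau_x - a_x$. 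Since $\dhausdorff(A_l, B_l + \tau) \leq \delta = 1 + \gamma$, every $a$ is within $\delta$ of $r_1 + \tau$ or of $r_2 + \tau$, which by these two lower bounds forces $a_x \leq \tau_x + \gamma$ (if matched to $r_1$) or $a_x \geq \tau_x + \epsilon - \gamma$ (if matched to $r_2$). Equivalently, no point of $A_l$ has $x$-coordinate in the open interval $I \coloneqq (\tau_x + \gamma,\ \tau_x + \epsilon - \gamma)$, which is nonempty since $\epsilon > 2\gamma$.

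Next I would sort $A_l$ by $x$-coordinate; using $x_i \epsilon^{1.5} < \tfrac12\epsilon$ one checks the order is $p_0^1, p_0^2, q_0, p_1^1, p_1^2, q_1, \dots, p_{n-1}^1, p_{n-1}^2, q_{n-1}$. Let $a^-$ be the largest point of $A_l$ with $(a^-)_x \leq \tau_x + \gamma$, and $a^+$ the smallest with $(a^+)_x \geq \tau_x + \epsilon - \gamma$. Both exist: if $a^+$ did not exist, then the rightmost point $q_{n-1}$ at $(2n - \tfrac12)\epsilon$ would satisfy $(q_{n-1})_x \leq \tau_x + \gamma \leq (2n-1)\epsilon + \gamma$, forcing $\tfrac12\epsilon \leq 4n^2\epsilon^2$, which fails for our choice of $\epsilon$; symmetrically, if $a^-$ did not exist, the leftmost point $p_0^1$ at $x_0\epsilon^{1.5}$ would satisfy $x_0\epsilon^{1.5} \geq \tau_x + \epsilon - \gamma \geq \epsilon - \gamma$, contradicting $x_0\epsilon^{1.5} \leq M\epsilon^{1.5} \ll \epsilon$. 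Since $I$ contains no point of $A_l$ and $(a^-)_x \leq \tau_x + \gamma < \tau_x + \epsilon - \gamma \leq (a^+)_x$, the points $a^-$ and $a^+$ are consecutive in the sorted order and $(a^+)_x - (a^-)_x \geq \epsilon - 2\gamma$.

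Finally I would identify which consecutive pairs can have gap at least $\epsilon - 2\gamma$: the gap from $p_i^2$ to $q_i$ equals $\tfrac12\epsilon - x_i\epsilon^{1.5}$ and the gap from $q_i$ to $p_{i+1}^1$ equals $\tfrac12\epsilon + x_{i+1}\epsilon^{1.5}$, both of which are $< \epsilon - 2\gamma$ by the choice $\epsilon = (4Mn^2)^{-4}$, whereas the gap from $p_i^1$ to $p_i^2$ is exactly $\epsilon$. Hence $a^- = p_i^1$ and $a^+ = p_i^2$ for some $i$, so $2i\epsilon + x_i\epsilon^{1.5} = (p_i^1)_x \leq \tau_x + \gamma$ and $(p_i^1)_x + \epsilon = (p_i^2)_x \geq \tau_x + \epsilon - \gamma$; combining yields $\tau_x = 2i\epsilon + x_i\epsilon^{1.5} \pm \gamma = 2i\epsilon + x_i\epsilon^{1.5} \pm 4n^2\epsilon^2$. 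The main obstacle is purely the bookkeeping of this last paragraph: one must verify that with $\epsilon = (4Mn^2)^{-4}$ both the microtranslations $x_i\epsilon^{1.5}$ and the curvature error $\gamma = 4n^2\epsilon^2$ are genuinely smaller than $\tfrac12\epsilon$, so that the filling points $q_i$ really do "close" every gap of $A_l$ except the one strictly inside a number-point pair; the remaining steps are routine one-dimensional estimates.
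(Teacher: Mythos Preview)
Your proof is correct and follows essentially the same approach as the paper: both arguments show that the $\delta$-balls around $r_1+\tau$ and $r_2+\tau$ leave an uncovered interval of length roughly $\epsilon$ on the $x$-axis, and then use the gap structure of $A_l$ (the only consecutive gap of size close to $\epsilon$ being between $p_i^1$ and $p_i^2$) to pin down $\tau_x$. Your version is simply more explicit about the one-dimensional bookkeeping (the existence of $a^-,a^+$ and the case analysis on consecutive gaps), whereas the paper states the same conclusions more tersely.
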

\begin{proof}
	Let $\tau \in [0, (2n-1)\epsilon]^2$ and assume $\dhausdorff(A_l, B_l+\tau) \leq \delta$.
	Then all points in $A_l$ are at distance at most $\delta$ from one of the two points in $B_l$.
	Furthermore, both points in $B_l+\tau$ also have at least one close point in $A_l$, as
\[
	\norm{r_1 + \tau - p_0^1}_2 \leq 1 - \tau_x + 4n^2 \epsilon^2 \leq \delta \quad\text{and}\quad \norm{r_2 + \tau - q_{n-1}}_2 \leq 1 + \tau_x - (2n-\frac{3}{2})\epsilon + 4n^2 \epsilon^2 < \delta,
\]
using that $n \geq 1$ and Lemma \ref{lem:error}.

	The gaps between neighboring points in $A_l$ either have width close to $\frac{1}{2}\epsilon$, if the gap is between a number point and a filling point ($p_i^1$ and $q_{i-1}$, or $p_i^2$ and $q_{i}$), or they have a width of $\epsilon$, if the gap is between two number points ($p_i^1$ and $p_i^2$).
Furthermore, the two points in $B_l$ have distance $2 + \epsilon$, so there is an $\epsilon - 8 n^2 \epsilon^2$ gap between their $\delta$-balls.
Thus, there is an $i$ such that $p_i^1$ has distance at most $\delta$ to $r_1$, and $p_i^2$ has distance at most $\delta$ to $r_2$. This alignment of the gadgets can only be realized by a translation $\tau$ for which
\[
\tau_x = 2i\epsilon + x_i \epsilon^{1.5} \pm 4 n^2 \epsilon^2,
\]
which completes the proof.
\end{proof}

\begin{lemma}\label{lem:stripe_lemma2}
	Given a low-level gadget $A_l, B_l$ as constructed above and the translation being restricted to $\tau \in [0, (2n-1)\epsilon]^2$, it holds that if
\[
	\exists i \in \mathbb{N}: \tau_x = 2i\epsilon + x_i \epsilon^{1.5},
\]
then $\hausdorff(A_l, B_l+\tau) \leq \delta$.
\end{lemma}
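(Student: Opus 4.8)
The statement is the converse of Lemma~\ref{lem:stripe_lemma1}, and since $\hausdorff(A_l, B_l+\tau)$ is the maximum of $\dhausdorff(A_l, B_l+\tau)$ and $\dhausdorff(B_l+\tau, A_l)$, I would establish these two directed bounds separately. Two consequences of the scale $\epsilon = (4Mn^2)^{-4}$ will be used throughout. First, $x_k\epsilon^{1.5} \le M\epsilon^{1.5} < \tfrac12\epsilon$ for every $k$, so the points of $A_l$ occur from left to right exactly in the order $p_0^1, p_0^2, q_0, p_1^1, p_1^2, q_1, \dots, p_{n-1}^1, p_{n-1}^2, q_{n-1}$, each having $x$-coordinate in $[0, 2n\epsilon]$; in particular $p_i^1$ and $p_i^2$ are consecutive in this order. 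Second, $2n\epsilon < \tfrac12$, which is what makes every invocation of Lemma~\ref{lem:error} below legitimate. Write $\tau = (\tau_x, \tau_y)$ with $\tau_x = 2i\epsilon + x_i\epsilon^{1.5}$ for some $i \in \{0, \dots, n-1\}$ and $\tau_y \in [0, (2n-1)\epsilon]$.

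For $\dhausdorff(B_l+\tau, A_l) \le \delta$ I would observe that, by the choice $\tau_x = 2i\epsilon + x_i\epsilon^{1.5}$, we have $r_1 + \tau = p_i^1 + (-1, \tau_y)$ and $r_2 + \tau = p_i^2 + (1, \tau_y)$, so $r_1 + \tau$ is at horizontal distance exactly $1$ from $p_i^1$ and $r_2 + \tau$ is at horizontal distance exactly $1$ from $p_i^2$. Applying Lemma~\ref{lem:error} (with horizontal offset $1 \in [\tfrac12, 2]$) upgrades these to Euclidean distance at most $1 + 4n^2\epsilon^2 = \delta$, so each of the two points of $B_l+\tau$ has a point of $A_l$ within~$\delta$.

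For $\dhausdorff(A_l, B_l+\tau) \le \delta$ the key point is that the $\epsilon$-gap between $p_i^1$ and $p_i^2$ splits $A_l$ into a left part served by $r_1+\tau$ and a right part served by $r_2+\tau$. Any $a \in A_l$ with $a_x \le (p_i^1)_x = \tau_x$ is at horizontal distance $a_x + 1 - \tau_x$ from $r_1+\tau$, which lies in $[1 - 2n\epsilon, 1] \subseteq [\tfrac12, 2]$, so Lemma~\ref{lem:error} gives $\norm{a - (r_1+\tau)}_2 \le 1 + 4n^2\epsilon^2 = \delta$. Symmetrically, any $a \in A_l$ with $a_x \ge (p_i^2)_x = \tau_x + \epsilon$ is at horizontal distance $1 - \bigl(a_x - (p_i^2)_x\bigr) \in [1 - 2n\epsilon, 1]$ from $r_2+\tau$, hence within $\delta$ of it by the same lemma. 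Since $p_i^1$ and $p_i^2$ are consecutive in the left-to-right order of $A_l$, every point of $A_l$ satisfies (at least) one of the two conditions, which finishes this direction and hence the proof.

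I do not expect a genuine obstacle here: the argument is entirely bookkeeping. The only thing to be careful about is that every horizontal offset fed into Lemma~\ref{lem:error} stays within $[\tfrac12, 2]$ — this is exactly where the bound $2n\epsilon < \tfrac12$, i.e.\ the choice of $\epsilon$, is needed — and that the case split of $A_l$ at the gap between $p_i^1$ and $p_i^2$ is exhaustive, which is immediate from the left-to-right ordering fixed at the start.
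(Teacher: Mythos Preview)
Your proof is correct and follows essentially the same approach as the paper: split $A_l$ at the gap between $p_i^1$ and $p_i^2$, serve the left part by $r_1+\tau$ and the right part by $r_2+\tau$, and invoke Lemma~\ref{lem:error} to absorb the vertical offset~$\tau_y$ into the slack $4n^2\epsilon^2$ built into~$\delta$. The paper's proof is terser, but your version makes explicit the left-to-right ordering of~$A_l$ and the range checks needed to apply Lemma~\ref{lem:error}, which the paper leaves implicit.
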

\begin{proof}
	Let $i \in \mathbb{N}$ and let $\tau_x = 2i\epsilon + x_i \epsilon^{1.5}$. Consider any translations $\tau \in \{\tau_x\} \times [0,2(n-1)\epsilon]$. Due to the restricted translation and Lemma \ref{lem:error}, we can disregard the error terms that arise from the vertical translation $\tau_y$ as they are compensated for by $\delta$.
Then all the points in $A_l$ before and including $p_i^1$ are at distance at most $\delta$ from $r_1 \in B_l+\tau$ and all the points afterwards are at distance at most $\delta$ from $r_2 \in B_l+\tau$. Clearly, both points in $B_l+\tau$ then also have points from $A_l$ at distance $\delta$, and thus $\hausdorff(A_l, B_l+\tau) \leq \delta$.
\end{proof}

\subsubsection{High-level gadgets}

This construction is inspired by the hard instance that was given in \cite{components_lb}. We want to obtain a grid of translations of spacing $\epsilon$ with some microtranslations in the $\Oh(\epsilon^{1.5})$ range. We already defined the low-level gadget above, and we now define the high-level gadgets.

\paragraph*{Column Gadget}
The column gadget induces columns in translational space, i.e., it enforces that valid translations have to lie on one of these columns. The column gadget is actually the low-level gadget we already described above. You can see a sketch of this gadget in Figure \ref{fig:column_gadget}. To semantically distinguish it from the low-level gadget, we refer to the point sets of the column gadget as $A_c$ and $B_c$.

\paragraph*{Row Gadget}
The row gadget induces rows in translational space, i.e., it enforces that valid translations have to lie on one of these rows. We obtain the row gadget by rotating all points in the low-level gadget around the origin by $\pi/2$ counterclockwise. You can see a sketch of this gadget in Figure \ref{fig:row_gadget}. We call the point sets of the row gadget $A_r$ and $B_r$.

\paragraph*{Diagonal Gadget}
The diagonal gadget induces diagonals in translational space, i.e., it enforces that valid translations have to lie on one of these diagonals. As opposed to the column and row gadget, the diagonal gadget also has to be scaled. We scale the sets $A_l$ and $B_l$ separately. We scale $A_l$ such that the gap between the number point pairs $p_i^1, p_i^2$ becomes $\frac{1}{\sqrt{2}}\epsilon$. And we scale $B_l$ such that the gap between the points becomes $2 + \frac{1}{\sqrt{2}}\epsilon$.
After scaling, we rotate the points counterclockwise around the origin by $\pi/4$.
You can see a sketch of this gadget in Figure \ref{fig:diagonal_gadget}.
We call the point sets of the diagonal gadget $A_d$ and $B_d$.

\paragraph*{Translation Gadget}
To restrict the translations for the directed Hausdorff distance under translation, we introduce another gadget. The first set of points $A_t$ contains
\[
	z_l \coloneqq (-1+(2n-1)\epsilon, 0),\quad z_r \coloneqq (1, 0),\quad z_b \coloneqq (0, -1+(2n-1)\epsilon),\quad z_t \coloneqq (0, 1).
\]
The second point set $B_t$ only contains the origin $z_c \coloneqq (0,0)$.
We want to make sure that this gadget behaves well in a certain range.
\begin{lemma} \label{lem:translation_gadget_3sum}
	Given $\tau \in [0, (2n-1)\epsilon]^2$, it holds that $\hausdorff(A_t, B_t + \tau) \leq \delta$.
\end{lemma}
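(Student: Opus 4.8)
The plan is to exploit that $B_t + \tau = \{z_c + \tau\} = \{\tau\}$ is a single point (since $z_c = (0,0)$). Then $\hausdorff(A_t, B_t+\tau) = \max\{\dhausdorff(A_t, B_t+\tau), \dhausdorff(B_t+\tau, A_t)\}$, and the second term $\dhausdorff(B_t+\tau,A_t) = \min_{a \in A_t}\norm{\tau - a}_2$ is trivially at most the first term $\dhausdorff(A_t, B_t+\tau) = \max_{a \in A_t}\norm{a - \tau}_2$. Hence it suffices to show that each of the four points $z_l, z_r, z_b, z_t$ lies within distance $\delta = 1 + 4n^2\epsilon^2$ of $\tau$.

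For $z_r = (1,0)$ and $z_t = (0,1)$ I would invoke Lemma~\ref{lem:error} (and its $x$-$y$-swapped analogue) with $p = z_r$ (resp.\ $p = z_t$) and $q = z_c$: the hypotheses hold since $\abs{(z_r)_x - (z_c)_x} = 1 \in [\tfrac12, 2]$ and the $y$-coordinates agree (and symmetrically for $z_t$). This yields $\norm{z_r - \tau}_2 \le \abs{1 - \tau_x} + 4n^2\epsilon^2 = (1-\tau_x) + 4n^2\epsilon^2 \le \delta$, using $\tau_x \ge 0$, and likewise $\norm{z_t - \tau}_2 \le (1-\tau_y) + 4n^2\epsilon^2 \le \delta$.

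For $z_l = (-1+(2n-1)\epsilon, 0)$ and $z_b = (0, -1+(2n-1)\epsilon)$ I again apply Lemma~\ref{lem:error} with $q = z_c$; the precondition $\abs{(z_l)_x} = 1 - (2n-1)\epsilon \in [\tfrac12, 2]$ is met because $\epsilon = (4Mn^2)^{-4}$ is tiny. This gives $\norm{z_l - \tau}_2 \le \abs{-1+(2n-1)\epsilon - \tau_x} + 4n^2\epsilon^2$, and since $0 \le \tau_x \le (2n-1)\epsilon$ the quantity $-1+(2n-1)\epsilon-\tau_x$ lies in $[-1,\,-1+(2n-1)\epsilon]$, so its absolute value is at most $1$; hence $\norm{z_l - \tau}_2 \le \delta$, and the bound for $z_b$ is identical after swapping $x$ and $y$.

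Combining the four estimates gives $\dhausdorff(A_t, B_t + \tau) \le \delta$, and therefore $\hausdorff(A_t, B_t+\tau) \le \delta$. There is no substantial obstacle here: the only points requiring care are (i) getting the direction of the Hausdorff distance right, which the singleton observation settles, and (ii) verifying the $[\tfrac12, 2]$ precondition of Lemma~\ref{lem:error} for the two slightly shrunken points $z_l$ and $z_b$, which follows immediately from the extreme smallness of $\epsilon$.
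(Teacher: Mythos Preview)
Your proof is correct and follows essentially the same approach as the paper: both arguments reduce to bounding $\norm{a - (z_c+\tau)}_2$ for each $a \in A_t$ via Lemma~\ref{lem:error} (and its coordinate-swapped analogue). Your singleton observation that $\dhausdorff(B_t+\tau,A_t) \le \dhausdorff(A_t,B_t+\tau)$ is a slightly cleaner way to dispatch the reverse direction than the paper's one-line remark, but the substance is identical.
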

\begin{proof}
As $B_t$ has a point on all sides, clearly $\dhausdorff(B_t + \tau, A_t) \leq \delta$. Furthermore, 
\[
	\norm{z_l - (z_c + \tau)}_2 \leq 1 + 4n^2 \epsilon^2 \leq \delta \quad\text{and}\quad \norm{z_r - (z_c + \tau)}_2 \leq \delta,
\]
using Lemma \ref{lem:error}. Analogous statements hold for $z_b$ and $z_t$. Thus, also $\dhausdorff(A_t, B_t + \tau) \leq \delta$.
\end{proof}

\begin{figure}
\begin{subfigure}[b]{0.3\textwidth}
		\centering
		\includegraphics[scale=0.35]{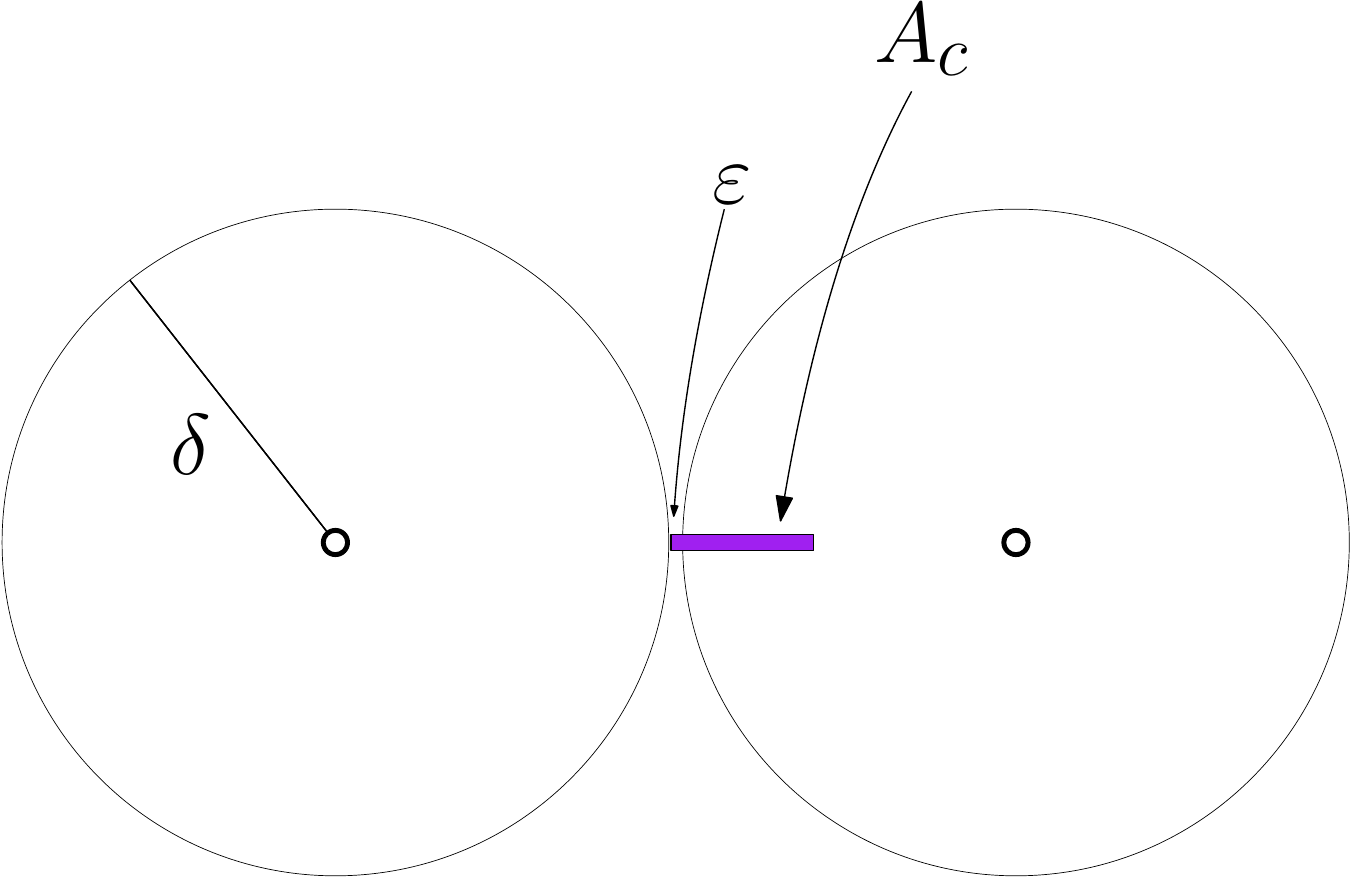}
		\caption{Column Gadget}
		\label{fig:column_gadget}
	\end{subfigure}
	\hfill
	\begin{subfigure}[b]{0.3\textwidth}
		\centering
		\includegraphics[scale=0.35]{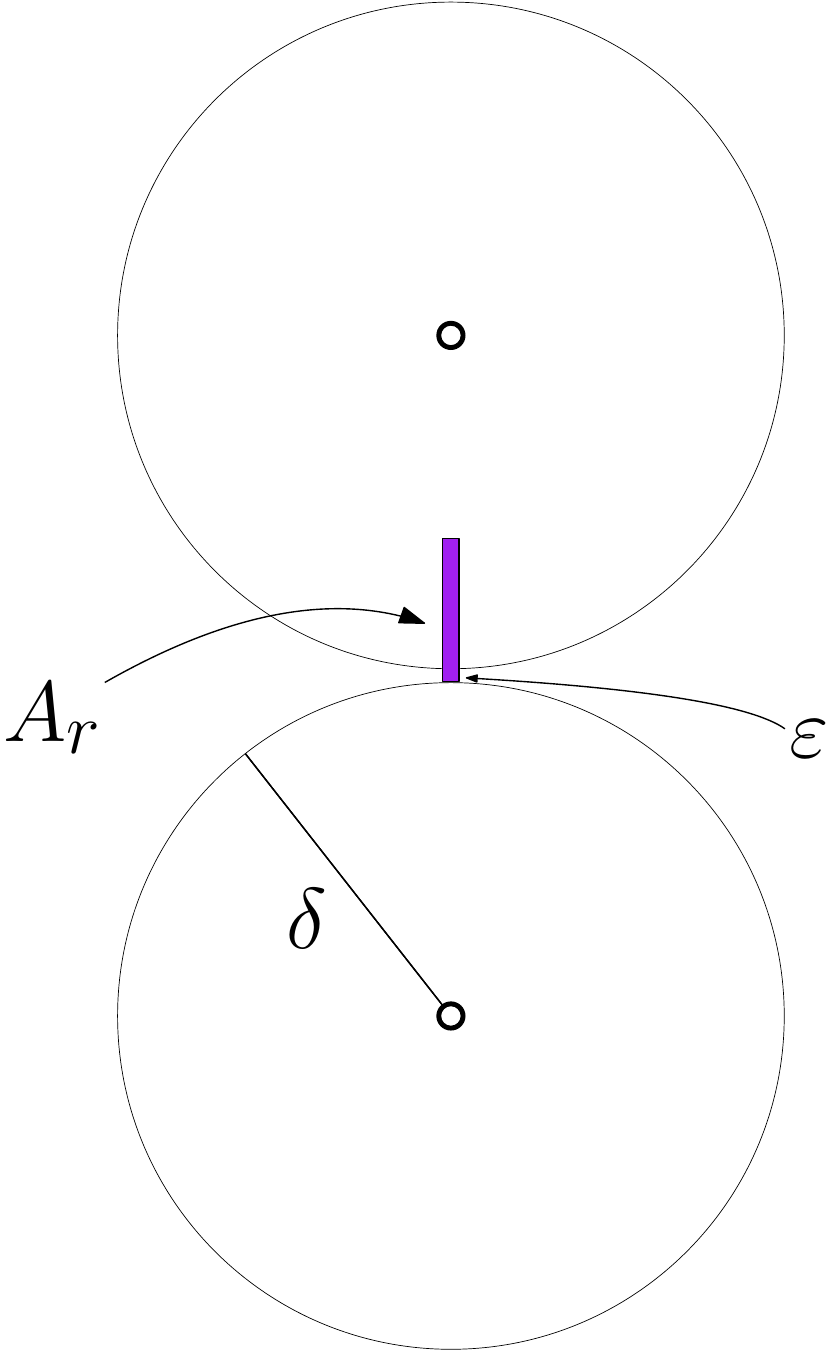}
		\caption{Row Gadget}
		\label{fig:row_gadget}
	\end{subfigure}
	\hfill
	\begin{subfigure}[b]{0.3\textwidth}
		\centering
		\includegraphics[scale=0.35]{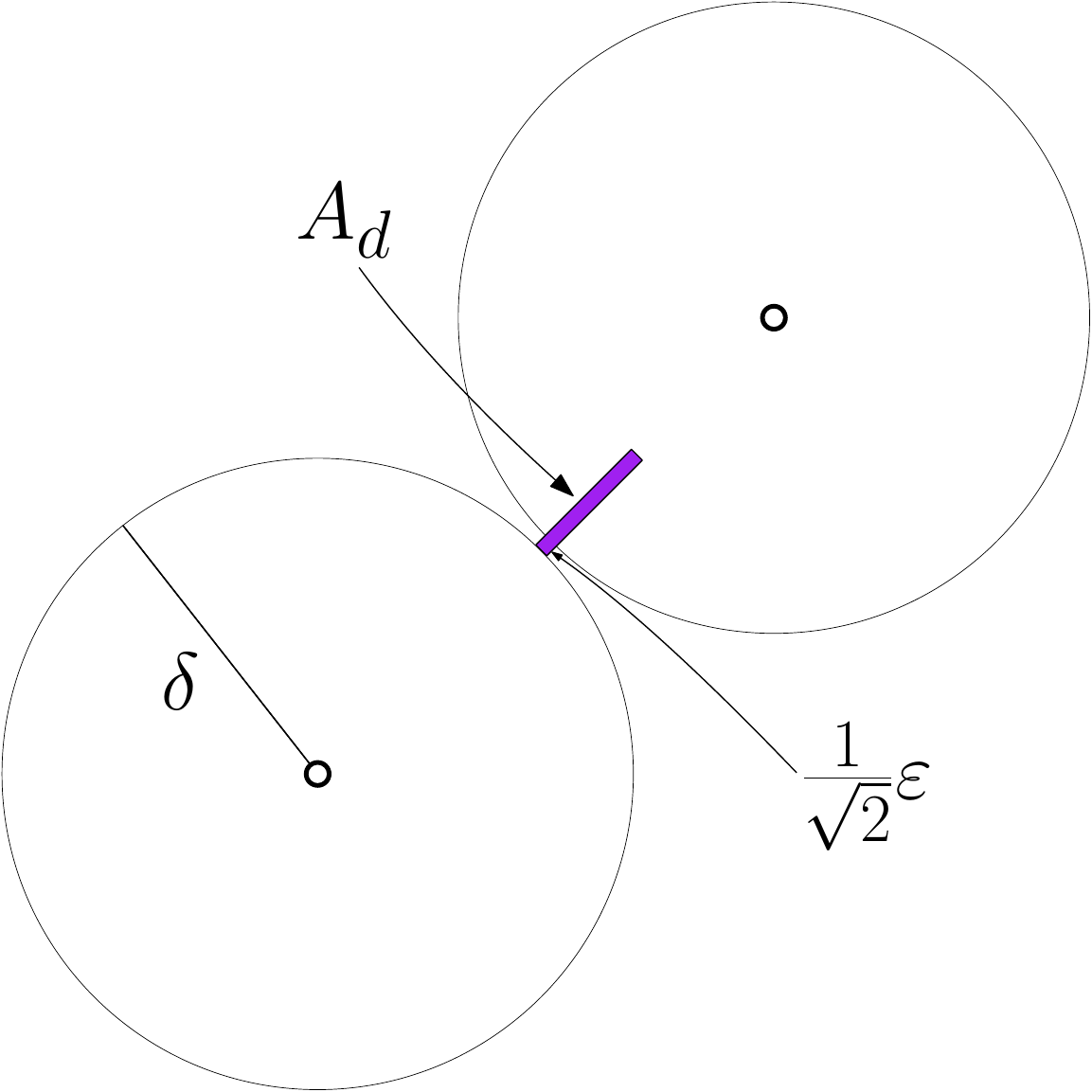}
		\caption{Diagonal Gadget}
		\label{fig:diagonal_gadget}
	\end{subfigure}
	\caption{Three of the high-level gadgets. The points of $A$ are all in the low-level gadgets, while the points in $B$ are explicitly shown including their $\delta$-ball.} \label{fig:main_gadgets}
\end{figure}

\subsubsection{Complete construction}

To obtain the final sets of the reduction, we now place all four described high-level gadgets (i.e., column gadget, row gadget, diagonal gadget, and translation gadget) far enough apart. More explicitly, the point sets $A, B$ of the Hausdorff distance under translation instance are defined as
\[
	A \coloneqq A_c \cup (A_r + (10,0)) \cup (A_d + (20,0)) \cup (A_t + (30,0))
\]
and
\[
	B \coloneqq B_c \cup (B_r + (10,0)) \cup (B_d + (20,0)) \cup (B_t + (30,0)).
\]
The far placement ensures that the two point sets of the respective gadgets have to be matched to each other when the Hausdorff distance under translation is at most delta $\delta$.

\subsection{Proof of correctness}

First, we want to ensure that everything relevant happens in a very small range of translations.

\begin{lemma}\label{lem:translation_constraint}
	Let $\tau \in \RR^2$. If $\dhausdorff(A,B+\tau) \leq \delta$, then $\tau \in [0, (2n-1)\epsilon]^2$.
\end{lemma}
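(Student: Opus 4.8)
The plan is to use the Translation Gadget $A_t, B_t$ placed at offset $(30,0)$ to pin down the feasible translations. Suppose $\dhausdorff(A, B + \tau) \leq \delta$. Since the four high-level gadgets are placed $10$ apart and each has bounded extent (all coordinates are $O(1)$, with $\epsilon$ chosen minuscule), while $\delta = 1 + 4n^2\epsilon^2 < 2$, a point of $B_t + (30,0) + \tau$ can only be within distance $\delta$ of a point of $A$ that also belongs to the (translated) Translation Gadget; the same localization applies to each of the column, row, and diagonal gadgets. Hence in particular $\dhausdorff(B_t + (30,0) + \tau, A_t + (30,0)) \leq \delta$, i.e.\ the single point $z_c + \tau$ must lie within distance $\delta$ of one of $z_l, z_r, z_b, z_t$. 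Actually the right direction to exploit is $\dhausdorff(A_t, B_t+\tau)\le\delta$: every one of the four points $z_l, z_r, z_b, z_t$ must be within distance $\delta$ of the single translated point $z_c + \tau = \tau$.

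The key computation is then elementary. From $\norm{z_r - \tau}_2 \leq \delta$ with $z_r = (1,0)$ we get $(1-\tau_x)^2 + \tau_y^2 \leq \delta^2$, and since $\delta = 1 + 4n^2\epsilon^2$ with $\epsilon$ tiny, this forces $\tau_x \geq 1 - \delta \cdot(\text{something})$; more carefully, $\abs{1-\tau_x}\le\delta$ and $\abs{\tau_y}\le\delta$ are immediate but too weak, so instead combine the constraint from $z_r=(1,0)$ with the one from $z_l=(-1+(2n-1)\epsilon,0)$. From $z_l$: $\norm{z_l-\tau}_2\le\delta$ gives $\abs{\tau_x - (-1+(2n-1)\epsilon)} \le \delta$, hence $\tau_x \le -1+(2n-1)\epsilon+\delta = (2n-1)\epsilon + 4n^2\epsilon^2$, and from $z_r$: $\abs{\tau_x - 1}\le\delta$ gives $\tau_x \ge 1 - \delta = -4n^2\epsilon^2$. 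Symmetrically $z_b, z_t$ give $-4n^2\epsilon^2 \le \tau_y \le (2n-1)\epsilon + 4n^2\epsilon^2$. Since $4n^2\epsilon^2$ is far smaller than $\epsilon$ (because $\epsilon = (4Mn^2)^{-4}$), these two-sided bounds are contained in, say, $[-\epsilon, (2n-1)\epsilon + \epsilon]$; to land exactly in $[0,(2n-1)\epsilon]^2$ as stated one should sharpen slightly, but the cleanest route is to note that the column gadget further constrains $\tau_x$ to be of the form $2i\epsilon + x_i\epsilon^{1.5} \pm 4n^2\epsilon^2$ by Lemma~\ref{lem:stripe_lemma1} (once we know $\tau$ is roughly in the right box so that lemma applies), and the row gadget constrains $\tau_y$ analogously, and both such forms together with the crude bounds above land strictly inside $[0,(2n-1)\epsilon]^2$. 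So the proof is: crude bounds from the Translation Gadget put $\tau$ in a slightly enlarged box; the column and row gadgets then refine $\tau_x, \tau_y$ to discrete values that necessarily lie in $[0,(2n-1)\epsilon]^2$.

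The main obstacle is the bootstrapping: Lemma~\ref{lem:stripe_lemma1} is stated under the hypothesis $\tau \in [0,(2n-1)\epsilon]^2$, which is exactly what we are trying to prove, so I cannot invoke it directly. The fix is to observe that its proof only ever uses $\tau$ being in a box small enough that (a) Lemma~\ref{lem:error}'s curvature bound applies and (b) the relevant points of $A_c$ and $B_c$ are the ones matched; a slightly enlarged box $[-\epsilon, 2n\epsilon]^2$ still satisfies these, so one either restates Lemma~\ref{lem:stripe_lemma1} for this enlarged box or inlines the argument. Once $\tau_x \in \{2i\epsilon + x_i\epsilon^{1.5} \pm 4n^2\epsilon^2 : 0 \le i < n\}$ and the crude lower bound $\tau_x \ge -4n^2\epsilon^2$ both hold, we must have $i \ge 0$, so $\tau_x \ge 0$; and from $\tau_x \le (2n-1)\epsilon + 4n^2\epsilon^2$ we get $i \le n-1$ (as $x_i \ge 1$ would already push $i=n-1,x_i=M$ up to at most $(2n-2)\epsilon + M\epsilon^{1.5} < (2n-1)\epsilon$), so $\tau_x \le 2(n-1)\epsilon + M\epsilon^{1.5} < (2n-1)\epsilon$. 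Symmetrically for $\tau_y$. Hence $\tau \in [0,(2n-1)\epsilon]^2$, completing the proof.
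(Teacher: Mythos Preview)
Your high-level structure matches the paper's: first use the Translation Gadget to confine $\tau$ to the slightly enlarged box $[-4n^2\epsilon^2,(2n-1)\epsilon+4n^2\epsilon^2]^2$, then use the Column and Row Gadgets to sharpen to $[0,(2n-1)\epsilon]^2$. The crude step is essentially identical.

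The refinement step, however, is carried out differently. You route through (an extended version of) Lemma~\ref{lem:stripe_lemma1} to conclude that $\tau_x$ must be one of the discrete values $2i\epsilon+x_i\epsilon^{1.5}\pm 4n^2\epsilon^2$, and then argue these values avoid the two slivers; you correctly flag the circularity and propose to patch it by enlarging the box in that lemma's hypothesis. The paper instead avoids Lemma~\ref{lem:stripe_lemma1} entirely here and argues directly: for $\tau_x\in[-4n^2\epsilon^2,0)$ it checks that the specific point $p_0^1\in A_c$ is at distance $>\delta$ from both $r_1+\tau$ and $r_2+\tau$ (crucially using $x_0\ge 1$ so that $(p_0^1)_x\ge\epsilon^{1.5}$), and for $\tau_x\in((2n-1)\epsilon,(2n-1)\epsilon+4n^2\epsilon^2]$ it checks $p_{n-1}^2$ similarly. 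This is more economical: it only needs to rule out two tiny slivers, not re-establish the full discretization, so no bootstrapping is required. Your route is still correct once the extension of Lemma~\ref{lem:stripe_lemma1} is spelled out, but it does strictly more work than necessary.
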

\begin{proof}
Note that for a Hausdorff distance at most $\delta$, the sets $A_c$ and $B_c$ have to matched to each other and analogously for $A_r, B_r$, and $A_d, B_d$, and $A_t, B_t$.
To show the contrapositive, assume $\tau \notin [0, (2n-1)\epsilon]^2$.
For simplicity, we refer to the points in the high-level gadgets with the notation of the low-level gadget.
Due to the translation gadget, we have
\[
	\norm{z_l - (z_c + \tau)}_2 > \delta \quad\text{for}\quad \tau_x > (2n-1)\epsilon + 4n^2\epsilon^2,
\]
and
\[
	\norm{z_r - (z_c + \tau)}_2 > \delta \quad\text{for}\quad \tau_x < -4n^2\epsilon^2.
\]
We now show that under these restricted translations and as $\dhausdorff(A,B+\tau) \leq \delta$, both points $r_1, r_2$ in $B_c$ have at least one point of $A_c$ at distance $\delta$. In the column gadget for $\tau_x \in [-4n^2\epsilon^2,0)$, we have
\[
	\norm{(r_1+\tau) - p_0^1}_2 \geq \abs{- 1 - (p_0^1)_x + \tau_x} > \delta \quad\text{and}\quad \norm{(r_2+\tau) - p_0^1}_2 \geq 1+\epsilon-\Oh(\epsilon^{1.5})  > \delta
\]
for small enough $\epsilon$ and as $x_0 > 0$ and thus there is a component of order $\epsilon^{1.5}$.
On the other hand, for $\tau_x \in ((2n-1)\epsilon, (2n-1)\epsilon + 4n^2\epsilon^2]$, we have
\[
	\norm{r_2+\tau - p_{n-1}^2}_2 \geq 1+\epsilon + \tau_x - (2n-1)\epsilon > \delta \quad\text{and}\quad \norm{r_1+\tau - p_{n-1}^2}_2 \geq 1 + \Oh(\epsilon^{1.5}) - 4n^2\epsilon^2 > \delta
\]
for small enough $\epsilon$. An analogous argument holds for the row gadget and $\tau_y$, as the row gadget is just a rotated version of the column gadget and the translation gadget is symmetric with respect to these gadgets.
\end{proof}

We can now prove the main result of this section.
\begin{theorem}\label{thm:3sum_hardness}
	Computing the directed or undirected Hausdorff distance under translation in \ltwo for two sets of size $n$ and $7$ cannot be solved in time $\Oh(n^{2-\gamma})$ for any $\gamma > 0$, unless the \threesum Hypothesis fails.

\end{theorem}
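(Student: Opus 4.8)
The plan is to reduce from \convthreesum, which by the equivalence with \threesum yields the stated conditional lower bound. Given a \convthreesum instance $X = (x_0, \dots, x_{n-1})$ with $X \subset [M]$, construct the point sets $A, B$ exactly as in the construction above: $A$ consists of the column, row, diagonal, and translation gadgets placed at horizontal offsets $0, 10, 20, 30$, and similarly for $B$; set $\epsilon = (4Mn^2)^{-4}$ and $\delta = 1 + 4n^2\epsilon^2$. Note $\abs{A} = \Oh(n)$ and $\abs{B} = 7$ (two points from each of the column, row, and diagonal gadgets, plus one from the translation gadget). The core claim to establish is: \emph{there exist $i, j$ with $x_i + x_j = x_{i+j}$ if and only if $\hausdorfft(A,B) \leq \delta$}; as in the OV proof, it suffices to show ``$\Rightarrow$'' for the undirected version and ``$\Leftarrow$'' for the directed version, since those imply both the directed and undirected lower bounds.

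For the ``$\Leftarrow$'' direction, suppose $\dhausdorff(A, B+\tau) \leq \delta$. Since the gadgets are placed far apart (offsets differing by $10$, while each gadget has diameter $\Oh(1)$ plus $\Oh(n\epsilon) = \Oh(1)$), the far-placement forces $B_c$ to match $A_c$, $B_r$ to match $A_r$, $B_d$ to match $A_d$, and $B_t$ to match $A_t$. Lemma~\ref{lem:translation_constraint} then gives $\tau \in [0,(2n-1)\epsilon]^2$. Now apply Lemma~\ref{lem:stripe_lemma1} to the column gadget: $\tau_x = 2i\epsilon + x_i\epsilon^{1.5} \pm 4n^2\epsilon^2$ for some index $i$. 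Applying the rotated version of Lemma~\ref{lem:stripe_lemma1} to the row gadget gives $\tau_y = 2j\epsilon + x_j\epsilon^{1.5} \pm 4n^2\epsilon^2$ for some index $j$. Finally, applying Lemma~\ref{lem:stripe_lemma1} to the diagonal gadget (which reads off the coordinate $\tfrac{1}{\sqrt 2}(\tau_x+\tau_y)$ after the $\pi/4$ rotation, with the scaling chosen so the spacing works out) forces $\tau_x + \tau_y = 2k\epsilon + x_k\epsilon^{1.5} \pm \Oh(n^2\epsilon^2)$ for some index $k$. Adding the first two relations gives $\tau_x + \tau_y = 2(i+j)\epsilon + (x_i+x_j)\epsilon^{1.5} \pm \Oh(n^2\epsilon^2)$. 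Comparing the integer parts (coefficients of $2\epsilon$): since all integers are bounded by $M$ and $\epsilon^{1.5}$-terms are at most $M\epsilon^{1.5} \ll \epsilon$ while the errors are $\Oh(n^2\epsilon^2) \ll \epsilon^{1.5}$, a separation-of-scales argument (using $\epsilon = (4Mn^2)^{-4}$) forces $k = i+j$ and then $x_i + x_j = x_{i+j}$.

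For the ``$\Rightarrow$'' direction, given $i, j$ with $x_i + x_j = x_{i+j}$, choose the translation $\tau = (\tau_x, \tau_y)$ with $\tau_x = 2i\epsilon + x_i\epsilon^{1.5}$ and $\tau_y = 2j\epsilon + x_j\epsilon^{1.5}$, which lies in $[0,(2n-1)\epsilon]^2$. Lemma~\ref{lem:stripe_lemma2} applied to the column gadget gives $\hausdorff(A_c, B_c+\tau) \leq \delta$ (the vertical component $\tau_y$ only contributes curvature error absorbed into $\delta$ via Lemma~\ref{lem:error}); its rotated version applied to the row gadget gives $\hausdorff(A_r, B_r + \tau) \leq \delta$. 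For the diagonal gadget, the projection onto the diagonal direction equals $\tfrac{1}{\sqrt2}(\tau_x+\tau_y) = \tfrac{1}{\sqrt2}(2(i+j)\epsilon + x_{i+j}\epsilon^{1.5})$, which by the chosen scaling is exactly an admissible translation hitting the $(i+j)$-th number point pair, so the analog of Lemma~\ref{lem:stripe_lemma2} yields $\hausdorff(A_d, B_d+\tau) \leq \delta$. Lemma~\ref{lem:translation_gadget_3sum} handles $A_t, B_t$. Combining over the four far-apart gadgets gives $\hausdorff(A, B+\tau) \leq \delta$, hence $\hausdorfft(A,B) \leq \delta$. Finally, an algorithm computing the Hausdorff distance under translation in \ltwo in time $\Oh(n^{2-\gamma})$ on instances with $\abs{B} = 7$ would, via this reduction, solve \convthreesum in time $\Oh(n^{2-\gamma})$, contradicting the \threesum Hypothesis.

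I expect the main obstacle to be the separation-of-scales bookkeeping in ``$\Leftarrow$'': one must carefully verify that the three scales --- the $2\epsilon$ ``integer index'' scale, the $\epsilon^{1.5}$ ``value'' scale, and the $\Oh(n^2\epsilon^2)$ curvature-error scale --- are cleanly separated for all $i,j \leq n$ and all values in $[M]$, which is exactly why $\epsilon$ must be polynomially small in both $M$ and $n$; and that the diagonal gadget's scaling by factors involving $\tfrac{1}{\sqrt2}$ interacts correctly with the $\pi/4$ rotation so that reading off $\tau_x + \tau_y$ (rather than some other linear combination) is what actually happens. The far-placement argument and the reductions to the already-proven lemmas about the low-level gadget are routine.
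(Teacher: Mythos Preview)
Your proposal is correct and follows essentially the same approach as the paper: reduce from \convthreesum, pick the same translation $\tau = (2i\epsilon + x_i\epsilon^{1.5},\, 2j\epsilon + x_j\epsilon^{1.5})$ for the forward direction and invoke Lemmas~\ref{lem:stripe_lemma2} and~\ref{lem:translation_gadget_3sum}, and for the converse use Lemma~\ref{lem:translation_constraint} followed by Lemma~\ref{lem:stripe_lemma1} on each of the three number gadgets and a separation-of-scales argument. The only cosmetic differences are that your ``$\Rightarrow$''/``$\Leftarrow$'' labels are swapped relative to the paper's (both are internally consistent), and the paper derives the diagonal constraint $\tau_x+\tau_y = 2k\epsilon + x_k\epsilon^{1.5} \pm 4n^2\epsilon^2$ via an explicit transformation-matrix computation rather than your verbal description of projecting onto the diagonal direction; the content and the scale-separation bookkeeping you flag as the main obstacle are handled identically.
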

\begin{proof}

We construct a Hausdorff under translation instance in this proof from a \convthreesum instance as described previously in this section, and then show that they are equivalent.
We first consider how to apply Lemma \ref{lem:stripe_lemma1} and Lemma \ref{lem:stripe_lemma2} to the diagonal gadget.
More precisely, we consider which translations align the gaps of $A_d$ and $B_d$ as is used in these two lemmas.
Consider the constraint $\tau_x = 2k\epsilon + x_k \epsilon^{1.5} \pm 4n^2\varepsilon^2$ that is encoded by the low-level gadget. Recall that we scale this gadget by $\frac{1}{\sqrt{2}}$ and rotate it by $\frac{\pi}{4}$, i.e., we apply the transformation matrix
\[
\frac{1}{\sqrt{2}} \cdot
\begin{pmatrix}
1 & -1 \\
1 & 1
\end{pmatrix}
\cdot
\begin{pmatrix}
\frac{1}{\sqrt{2}} & 0 \\
0 & \frac{1}{\sqrt{2}} \\
\end{pmatrix}
=
\frac{1}{2} \cdot
\begin{pmatrix}
1 & -1 \\
1 & 1
\end{pmatrix}
\]
to the right side of the constraint. Thus, for any $\alpha \in [0, (2n-1)\eps]$, the diagonal gadget encodes the constraints
\[
\begin{pmatrix}
\tau_x \\
\tau_y
\end{pmatrix}
=
\frac{1}{2} \cdot
\begin{pmatrix}
1 & -1 \\
1 & 1
\end{pmatrix}
\cdot
\begin{pmatrix}
2k\epsilon + x_k \epsilon^{1.5} \pm 4n^2\varepsilon^2 \\
\alpha
\end{pmatrix}
=
\frac{1}{2} \cdot
\begin{pmatrix}
2k\epsilon + x_k \epsilon^{1.5} \pm 4n^2\varepsilon^2 & -\alpha \\
2k\epsilon + x_k \epsilon^{1.5} \pm 4n^2\varepsilon^2 & \alpha
\end{pmatrix}
.
\]
By adding up the two constraints, we obtain
\[
	\tau_x + \tau_y = 2k\epsilon + x_k \epsilon^{1.5} \pm 4n^2\varepsilon^2.
\]
We now show correctness of the reduction.
\begin{description}
\item[$\mathbf{\Leftarrow}$:] Assume $X$ is a positive \convthreesum instance.
Then there exist $x_i, x_j$ such that $x_i + x_j = x_{i+j}$.
Consider $\tau = (2i\epsilon + x_i\epsilon^{1.5}, 2j\epsilon + x_j\epsilon^{1.5})$ as translation.
Due to Lemma \ref{lem:stripe_lemma2}, we have that $\hausdorff(A_c, B_c + \tau) \leq \delta$ and analogously $\hausdorff(A_r, B_r + \tau) \leq \delta$.
By the initial observation, we can also apply Lemma \ref{lem:stripe_lemma2} to the diagonal gadget, and thus $\hausdorff(A_d, B_d + \tau) \leq \delta$. Finally, by Lemma \ref{lem:translation_gadget_3sum}, we also have that $\hausdorff(A_t, B_t + \tau) \leq \delta$ for the given $\tau$.

\medskip

\item[$\mathbf{\Rightarrow}$:]
	Assume $\dhausdorfft(A, B) \leq \delta$. From Lemma \ref{lem:translation_constraint}, it follows that $\tau \in [0,(2n-1)\epsilon]^2$.
Then, due to Lemma \ref{lem:stripe_lemma1} and the initial observation about the diagonal gadget, we have that there exist $i,j,k$ that fulfill
\begin{align*}
	\tau_x &= 2i\epsilon + x_i \epsilon^{1.5} \pm 4n^2 \epsilon^2, \\
	\tau_y &= 2j\epsilon + x_j \epsilon^{1.5} \pm 4n^2 \epsilon^2, \\
	\tau_x + \tau_y &= 2k\epsilon + x_k \epsilon^{1.5} \pm 4n^2 \epsilon^2.
\end{align*}
It follows that
\[
	2i\epsilon + x_i \epsilon^{1.5} + 2j\epsilon + x_j \epsilon^{1.5} \pm 8n^2 \epsilon^2 = 2k\epsilon + x_k \epsilon^{1.5} \pm 4n^2 \epsilon^2,
\]
and thus $i+j = k$ and $x_i + x_j = x_k$.
\end{description}

It remains to argue why the above reduction implies the lower bound stated in the theorem. Assume we have an algorithm that computes the Hausdorff distance under translation in \ltwo in time $\Oh(n^{2-\gamma})$ for some $\gamma > 0$. Then, given a \convthreesum instance $X$ with $\abs{X}=n$, we can use the described reduction to obtain an equivalent Hausdorff under translation instance with point sets $A, B$ of size $\abs{A} = \Oh(n)$ and $\abs{B} = 7$ and solve it in time $\Oh(n^{2-\gamma})$, contradicting the \threesum Hypothesis.
\end{proof}

\section{Conclusion}

In this work, we provide matching lower bounds for the running time of two important cases of the fundamental distance measure Hausdorff distance under translation. These lower bounds are based on popular standard hypotheses from fine-grained complexity theory. Interestingly, we use two different hypotheses to show hardness. For the Hausdorff distance under translation for \lp, we show a lower bound of $(nm)^{1-o(1)}$ using the Orthogonal Vectors Hypothesis, while for the imbalanced case of $m = \Oh(1)$ in \ltwo, we show an $n^{2-o(1)}$ lower bound using the \threesum Hypothesis.
We leave it as an open problem whether Hausdorff distance under translation for the balanced case admits a strongly subcubic algorithm or if conditional hardness can be shown.

\bibliographystyle{plainurl}\bibliography{notes}

\end{document}